\documentclass[
reprint,
amsmath,
amssymb,
aps,
prx,
twocolumn,
floatfix,
superscriptaddress,
longbibliography,
preprintnumbers,
superscriptaddress,
]{revtex4-2}

\usepackage{graphicx}
\usepackage{dcolumn}
\usepackage{stix2}

\usepackage{bm}
\usepackage{braket}
\usepackage{amsthm}
\usepackage{natbib}
\usepackage{xcolor}
\usepackage{float}
\usepackage{float}
\usepackage{listings}
\usepackage{enumitem}
\usepackage{mathtools}
\usepackage{makecell,tabularx}

\usepackage[
colorlinks=true,
urlcolor=blue,
citecolor=blue,
linkcolor=blue,
hyperfootnotes=false,
pdfencoding=auto
]{hyperref}

\newtheorem{dfn}{Definition}
\newtheorem{thm}{Theorem}

\newtheorem{cor}{Corollary}

\begin{document}

\title{
High-Precision Variational Quantum SVD
via Classical Orthogonality Correction
}


\author{Shohei Miyakoshi}
\email{miyakoshi.shohei.qiqb@osaka-u.ac.jp}
\affiliation{Center for Quantum Information and Quantum Biology, Osaka University, Toyonaka, Osaka, 560-8531, Japan}
\affiliation{Computational Materials Science Research Team, RIKEN Center for Computational Science (R-CCS), Kobe, Hyogo, 650-0047, Japan}

\author{Takanori Sugimoto}
\affiliation{Center for Quantum Information and Quantum Biology, Osaka University, Toyonaka, Osaka, 560-8531, Japan}
\affiliation{Computational Materials Science Research Team, RIKEN Center for Computational Science (R-CCS), Kobe, Hyogo, 650-0047, Japan}
\affiliation{Advanced Science Research Center, Japan Atomic Energy Agency, Tokai, Ibaraki 319-1195, Japan}

\author{Tomonori Shirakawa}
\affiliation{Computational Materials Science Research Team, RIKEN Center for Computational Science (R-CCS), Kobe, Hyogo, 650-0047, Japan}
\affiliation{Quantum Computational Science Research Team, RIKEN Center for Quantum Computing (RQC), Wako, Saitama, 351-0198, Japan}
\affiliation{Computational Condensed Matter Physics Laboratory, RIKEN Cluster for Pioneering Research (CPR), Wako, Saitama, 351-0198, Japan}
\affiliation{RIKEN Interdisciplinary Theoretical and Mathematical Science Program (iTHEMS), Wako, Saitama, 351-0198, Japan}

\author{Seiji Yunoki}
\affiliation{Computational Materials Science Research Team, RIKEN Center for Computational Science (R-CCS), Kobe, Hyogo, 650-0047, Japan}
\affiliation{Quantum Computational Science Research Team, RIKEN Center for Quantum Computing (RQC), Wako, Saitama, 351-0198, Japan}
\affiliation{Computational Condensed Matter Physics Laboratory, RIKEN Cluster for Pioneering Research (CPR), Wako, Saitama, 351-0198, Japan}
\affiliation{Computational Quantum Matter Research Team, RIKEN Center for Emergent Matter Science (CEMS), Wako, Saitama 351-0198, Japan}

\author{Hiroshi Ueda}
\affiliation{Center for Quantum Information and Quantum Biology, Osaka University, Toyonaka, Osaka, 560-8531, Japan}
\affiliation{Computational Materials Science Research Team, RIKEN Center for Computational Science (R-CCS), Kobe, Hyogo, 650-0047, Japan}

\begin{abstract}
Evaluating the entanglement spectrum is essential for characterizing exotic quantum phases such as quantum criticality and topological order.
However, for large quantum many-body systems, this task is hindered by the exponential measurement complexity of standard tomographic techniques.
To address this challenge, we introduce a hybrid quantum-classical variational framework for the partial singular value decomposition of bipartite states, built upon the canonical form of matrix product states.
We employ a deflation-based optimization approach to sequentially extract the dominant and subdominant Schmidt components of target states.
Because hardware noise and finite circuit depth compromise the mutual orthogonality of these extracted vectors, we propose an \textit{improved} deflation algorithm that incorporates an explicit classical orthogonality correction.
This classical post-processing acts as an error-filtering mechanism, enabling the use of shallow and suboptimal quantum circuits.
As a result, numerical accuracy is decoupled from the quantum circuit optimization process, mitigating optimization difficulties caused by barren plateaus and hardware noise.
Furthermore, the use of shallow ansatzes enables a concurrent execution strategy.
The evaluation of overlap matrices is offloaded to classical tensor network contractions, while cross terms between the target state and the extracted vectors are computed using an auxiliary reference state.
This concurrent hybrid design improves computational throughput and bypasses the overhead of controlled target-state preparations.
Numerical benchmarks on the ground states of one- and two-dimensional Heisenberg models demonstrate improved accuracy and numerical stability.
By mitigating the hurdles of circuit depth, optimization hardness, and measurement complexity, our framework provides a robust pathway for large-scale entanglement spectrum estimation on advanced near-term devices while establishing a foundation for early fault-tolerant architectures.
\end{abstract}

\maketitle


\section{Introduction}
\label{sec:intro}

Since the advent of quantum mechanics, quantum entanglement has been extensively studied as a fundamental property distinguishing quantum systems from classical systems and as a central theme in the long-standing debate over non-locality~\cite{einstein_Can_1935,bell_einstein_1964,brunner_Bell_2014}.
In recent years, its significance has transcended the boundaries of theoretical physics and has become a widely recognized core resource that propels cutting-edge quantum technologies, including quantum computing and quantum communication~\cite{nielsen_Quantum_2010}.
This applicability extends deep into condensed matter physics, where it has emerged as a powerful probe for elucidating the characteristics of quantum many-body systems~\cite{amico_Entanglement_2008}.
Notably, it plays a pivotal role in classifying static properties such as quantum criticality~\cite{holzhey_Geometric_1994,calabrese_Entanglement_2004,tagliacozzo_Scaling_2008,pollmann_Theory_2009} and topological order~\cite{kitaev_Topological_2006,li_Entanglement_2008,pollmann_Entanglement_2010,*pollmann_Symmetry_2012,miyakoshi_Entanglement_2016}.
Beyond these static descriptions, it is also instrumental in characterizing dynamic properties such as information scrambling within quantum chaotic systems~\cite{calabrese_Evolution_2005,hosur_Chaos_2016}.
From these theoretical and applied viewpoints, 
accurately detecting and quantifying quantum entanglement on experimental platforms, such as quantum computers, is of central importance; however, the development of efficient schemes to achieve this remains a challenging task~\cite{brydges_Probing_2019,satzinger_Realizing_2021,islam_Measuring_2015,abanin_Measuring_2012}.

In light of the recent progress toward practical quantum computing, the development of precise methods for benchmarking quantum algorithms and evaluating the fidelity of quantum states generated on quantum devices is paramount~\cite{preskill_Quantum_2018}.
An important component of this endeavor is the establishment of robust methods for accurately quantifying quantum entanglement.
Current practical approaches for detecting entanglement on quantum computers include the direct measurement of $n$-th R{\'e}nyi entropy using multiple controlled-SWAP gates~\cite{ekert_Direct_2002,johri_Entanglement_2017,subasi_Entanglement_2019}, Quantum Phase Estimation (QPE)~\cite{rebentrost_Quantum_2018}, and the determination of the reduced density matrix through Quantum State Tomography (QST)~\cite{nielsen_Quantum_2010} or Classical Shadow Tomography (CST)~\cite{brydges_Probing_2019,huang_Predicting_2020,satzinger_Realizing_2021}.
However, implementing these algorithms poses significant challenges for near-term devices.
Specifically, QPE requires prohibitive circuit depths beyond the reach of Noisy Intermediate-Scale Quantum (NISQ) hardware.
Furthermore, while randomized measurement techniques such as CST have drastically reduced sampling complexity for local observables, reconstructing the full entanglement spectrum of large subsystems still demands exponential overheads relative to the subsystem size~\cite{nielsen_Quantum_2010,elben_randomized_2023}.
Variational formulations provide a promising alternative route, but they also raise a central practical challenge: how to achieve high numerical accuracy without relying on deep and highly expressive circuits that are difficult to optimize and implement on near-term hardware.
Consequently, the development of scalable and hardware-efficient entanglement detection methods is crucial for establishing reliable benchmarking techniques for quantum many-body systems on both current and future quantum computers.

To address these scalability challenges, we introduce a hybrid quantum-classical variational framework inspired by the canonical form of matrix product states (MPS)~\cite{schollwock_densitymatrix_2011}.
In the context of tensor networks (TN), imposing a canonical gauge on an MPS naturally reveals its Schmidt decomposition across a given bipartition.
Drawing on this structural property, we reformulate the extraction of the entanglement spectrum as a variational problem mapped onto parametrized quantum circuits.
Our framework assumes that a target state has already been prepared by a quantum circuit.
This state can be generated, for example, through Trotterized time evolution or a ground-state ansatz constructed variationally or rigorously for a specific Hamiltonian.
Given this prepared state, our primary objective is to construct a highly accurate partial singular value decomposition (SVD).
This involves systematically capturing both the dominant and subdominant singular values, which are essential for characterizing the underlying physical properties such as quantum criticality or topological order.

To achieve this objective beyond the expressivity limits of a single quantum circuit of constrained depth, we employ a deflation-based approach.
The core mechanism of deflation is sequential extraction.
First, a quantum circuit is optimized to capture the dominant Schmidt component, and its contribution is then subtracted from the target state.
Subsequently, the next circuit is optimized to approximate the residual.
Through this iterative optimization procedure, we systematically construct an accurate representation of the target state using a linear combination of these optimized circuits.
However, straightforward variational deflation methods often encounter numerical instabilities.
Due to limited circuit depth and inherent hardware noise, the sequentially optimized circuits do not preserve mutual orthogonality.
To resolve this limitation, we propose an \textit{improved} deflation algorithm incorporating an explicit classical orthogonality correction.
Rather than enforcing mutual orthogonality during the optimization process, this correction acts as a post-processing error-filtering mechanism on the subspace spanned by the sequentially extracted circuits.
Crucially, this filtering compensates for the limited expressivity of extracted circuits and tolerates both shallow and suboptimal ansatzes.
As a result, the final numerical accuracy no longer depends solely on the quality of the quantum circuit optimization, thereby mitigating optimization bottlenecks caused by barren plateaus and hardware noise.

Building on this resilience, the inherent shallowness of the optimized circuits enables an efficient and concurrent hybrid quantum-classical workflow.
Specifically, it allows the overlap matrices essential to the orthogonality correction to be computed accurately via classical TN contractions, thereby avoiding noise-susceptible quantum measurements.
While the classical processor handles these TN contractions, the quantum processor is concurrently dedicated to evaluating the cross-terms between the target state and the optimized circuits.
This quantum evaluation is efficiently performed by utilizing an auxiliary shallow reference state.
This strategy preserves linear signal sensitivity and bypasses the prohibitive overhead of implementing a controlled operation of the target state, thereby suppressing the error accumulation from massive gate sequences.
Rather than placing the full burden of accuracy on increasingly deep variational circuits, this architecture redistributes the computational task between classical orthogonality correction and quantum cross-term evaluation.
Accordingly, our framework provides a robust pathway for large-scale entanglement spectrum estimation, with relevance to advanced NISQ devices while establishing a foundation for early fault-tolerant architectures.
This work offers a practical route toward reconciling theoretical precision with hardware constraints, and broadens the scope of variational approaches to entanglement detection~\cite{bravo-prieto_Quantum_2020,wang_Variational_2021,kokail_Entanglement_2021,larose_Variational_2019}.

The remainder of this paper is organized as follows.
In Sec.~\ref{sec:mps_qc}, we establish the theoretical correspondence between the canonical form of the MPS and its quantum circuit representation, which serves as the foundation of our variational ansatz.
Section~\ref{sec:full_svd} introduces the variational formulation of SVD and provides a baseline demonstration using a global optimization strategy.
In Sec.~\ref{sec:partial_svd}, we present our core algorithmic contributions: the partial SVD algorithms aimed at systematically extracting dominant singular values.
We provide a detailed comparative analysis of four distinct formulations---full optimization, partial optimization, simple deflation, and improved deflation.
By benchmarking these approaches across 1D and 2D quantum spin models, we clarify the inherent limitations of the formulations lacking explicit orthogonality correction and demonstrate the enhanced numerical stability of the improved deflation algorithm.
Section~\ref{sec:discussion} discusses the practical implementation and scalability of this framework for near-term hardware.
We formulate the hybrid architecture employing the reference-state-based measurement scheme and evaluate its computational cost, highlighting the architectural synergy and concurrent execution that make the approach practical and noise-resilient.
Finally, Sec.~\ref{sec:conclusion} summarizes our findings and outlines future research directions.

\section{Theoretical Framework: Correspondence between MPS and Quantum Circuits}
\label{sec:mps_qc}


\begin{figure*}[htbp]
\centering
\includegraphics[width=1.5\columnwidth]{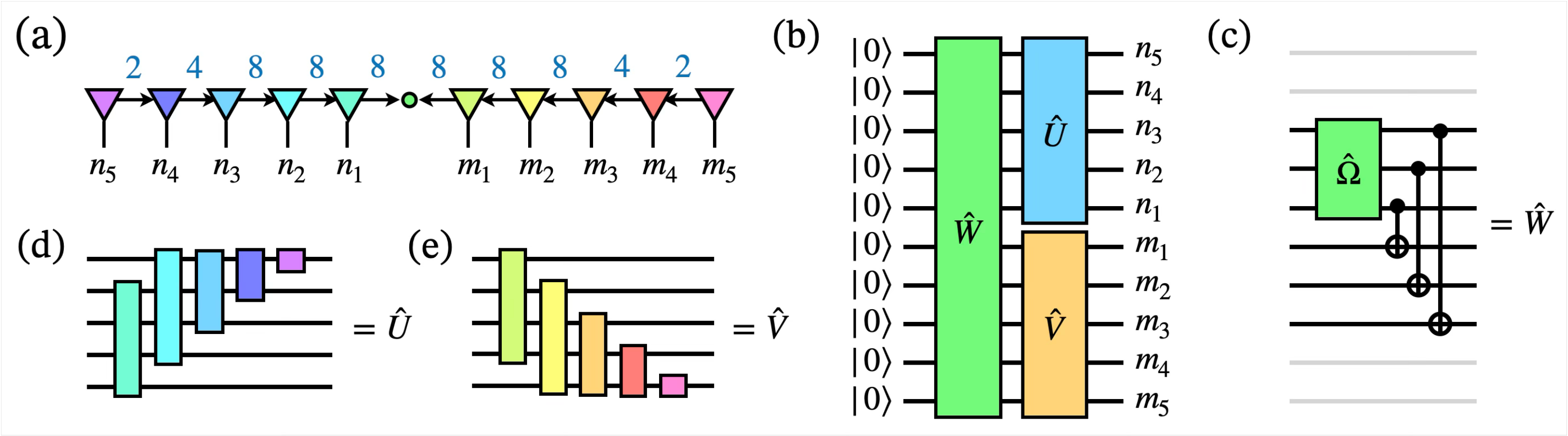}
\caption{
Correspondence between the canonical form of the MPS and quantum circuit representation.
(a) Canonical form of the MPS for an arbitrary quantum state $|\Psi\rangle$ composed of $(N+M)$ qubits. The state is bipartitioned into left and right subsystems $(A,B)$ at the boundary defined by the orthogonalized center depicted as a green circle.
The blue numbers on the bonds denote the bond dimension $D$.
(b) Quantum circuit representation of the same arbitrary state $|\Psi\rangle$, generated by applying three unitary gates: $\hat{U}$, $\hat{V}$ and $\hat{W}$, acting on the initial state $|0\rangle$.
This circuit rigorously corresponds to the MPS in (a) via Schmidt decomposition.
(c)-(e) Detailed decomposition of the unitary blocks shown in (b).
The colors of the MPS tensors in (a) correspond to the color of unitary gates in (c)-(e).
(c) The $\hat{W}$ gate, which determines the Schmidt coefficient, corresponds to the orthogonalized center in (a) and is constructed by a product of a smaller unitary gate $\hat{\Omega}$ and CNOT gates.
(d)(e) The $\hat{U}$ and $\hat{V}$ gates correspond to the unitary transformations of the SVD for the left and right subsystems, respectively.
}
\label{fig:mps-circuit}
\end{figure*}


First, we focus on the correspondence between two key representations of an arbitrary quantum state: the quantum circuit and the canonical form of the MPS.
To establish this connection, we consider an arbitrary quantum state $|\Psi\rangle$ composed of $(N+M)$ qubits and begin with its Schmidt decomposition:
\begin{align}
|\Psi\rangle=&\sum_{n,m}C_{n,m}|n,m\rangle
=\sum_{k}w_{k}|u_{k},v_{k}\rangle
\label{eq:arbitraty-state}
\end{align}
where $|n,m\rangle=|n\rangle\otimes|m\rangle$ is the direct product of the orthonormal bases $|n\rangle$ and $|m\rangle$ for the respective subsystems $A$ and $B$.
The indices $n$ and $m$ correspond to the bit-strings $n=n_{N}\cdots n_{1}$ and $m=m_{M}\cdots m_{1}$.
The coefficient matrix $C$ is expressed using SVD as $C=UWV^{T}$, where $U$ and $V$ are unitary matrices.
The diagonal elements $w_{k}>0$ of the diagonal real matrix $W$ are assumed to be in descending order.
Furthermore, the states $|u_{k}\rangle$ and $|v_{k}\rangle$ are constructed from these unitary matrices, defining the unitary gates $\hat{U}$ and $\hat{V}$ for each subsystem:
\begin{align}
|u_{k}\rangle=&
\sum_{n}|n\rangle U_{n,k}=\hat{U}|k\rangle,
\label{eq:unitary-gate-to-state1}
\\
|v_{k}\rangle=&
\sum_{m}|m\rangle V_{m,k}=\hat{V}|k\rangle.
\label{eq:unitary-gate-to-state2}
\end{align}
Next, to entangle two independent subsystems, we introduce the unitary gate $\hat{W}$ satisfying the following action for the initial state $|0\rangle$ as
\begin{align}
\hat{W}|0\rangle=\sum_{k}w_{k}|k,k\rangle.
\end{align}
where the singular values are normalized such that $\sum_{k}w_{k}^2=1$.
By combining this operator with the unitary gates $\hat{U}$ and $\hat{V}$, the arbitrary quantum state $|\Psi\rangle$ can be constructed as follows.
\begin{align}
|\Psi\rangle=(\hat{U}\otimes\hat{V})\hat{W}|0\rangle
\end{align}
The resulting quantum circuit representation for the state $|\Psi\rangle$ is shown in Fig.~\ref{fig:mps-circuit}.

As detailed in the Appendix~\ref{app:exact_quantum_circuit_representation}, the quantum circuit composed of the unitary gate $\hat{U}$, $\hat{V}$ and $\hat{W}$ corresponds rigorously to the canonical form of the MPS with the orthogonalized center, as illustrated in Fig.~\ref{fig:mps-circuit}(a).
It is worth noting that the unitary gate $\hat{U}$ and $\hat{V}$ can generally be decomposed into products of the multi-qubit gates in Fig.~\ref{fig:mps-circuit}(d)(e), where each multi-qubit gate corresponds to an MPS tensor in the left and right subsystems, respectively.
Furthermore, as shown in Fig.~\ref{fig:mps-circuit}(c), the unitary gate $\hat{W}$, which corresponds to the orthogonalized center depicted as the green circle in Fig.~\ref{fig:mps-circuit}(a), can be represented as the product of the CNOT gates and a smaller unitary gate $\hat{\Omega}$.
By tuning this unitary gate $\hat{\Omega}$, we can reproduce various singular value distributions $w_{n}$, drawing on the analogy with MPS construction described in the Appendix~\ref{app:quantum_circuit_representation_for_orthogonal_center}.
Crucially, the number of qubits $N_{q}$ shared between adjacent unitary gates determines the upper limit of the bond dimension $D$, given by $N_{q}=\log_{2}D$.
For instance, since the bond dimension between the tensors connected to the physical indices $n_1$ and $n_2$ in Fig. 1(a) is $D=8$, the number of shared qubit lines between the corresponding multi-qubit gates (the first and second from the left) in Fig. 1(d) is $N_q = \log_2(8) = 3$.

The canonical condition of the MPS tensors is inherently satisfied by the unitarity of the quantum gates.
Consequently, increasing the gate size directly enhances the expressivity of the quantum circuit, corresponding to a larger bond dimension in the MPS representation.
However, decomposing large-scale unitary gates like $\hat{U}$ and $\hat{V}$ into elementary gates is computationally challenging.
Therefore, this study focuses on optimizing an approximate quantum circuit composed of layered quantum gates with a fixed gate size of two qubits, which is readily implementable on current quantum devices.

\section{Full SVD Algorithm}
\label{sec:full_svd}

\subsection{Global Optimization Algorithm}
\label{subsec:full_svd_alg}

In this study, we utilize the established correspondence between quantum circuits and MPS to formulate the Schmidt decomposition as a variational quantum algorithm.
A distinguishing feature of our algorithm is that the gate $\hat{W}$, which determines the singular value profile, does not require optimization.
Instead, by fixing $\hat{W}$ as a reference gate defined by non-negative decreasing weights $w_{n}$, we can formulate the problem solely as a quantum circuit optimization for the unitary gates $\hat{U}$ and $\hat{V}$, which correspond to the left and right singular vectors, respectively.

To formally define this optimization task, we introduce the objective function $I(\hat{U},\hat{V})$ as the real part of the overlap between the trial state $|\Psi(\hat{U},\hat{V})\rangle$ and target state $|\Phi\rangle$.
The trial state is prepared by applying a fixed reference gate $\hat{W}$ and the variational unitary gates $\hat{U}$ and $\hat{V}$ to an initial state $|0\rangle$.
The objective function is expressed as:
\begin{align}
I(\hat{U},\hat{V})=&\mathrm{Re}\langle\Psi(\hat{U},\hat{V})|\Phi\rangle
\nonumber\\
=&\mathrm{Re}\langle 0|\hat{W}^{\dagger}(\hat{U}^{\dagger}\otimes\hat{V}^{\dagger})|\Phi\rangle\\
=&\mathrm{Re}\mathrm{tr}(WU^{\dagger}CV^{*})
\end{align}
where $C$ represents the coefficient matrix of the target state $|\Phi\rangle$, as defined in Eq.~\eqref{eq:arbitraty-state}.
The justification for maximizing $I(\hat{U},\hat{V})$ to achieve SVD is grounded in the following inequality:
\begin{align}
I(\hat{U},\hat{V})\le\sum^{D}_{n=1}w_{n}s^{(D)}_{n}
\le\sum^{D}_{n=1}w_{n}\sigma_{n}
\label{eq:inequality-vqsvd}
\end{align}
where $D=2^{\min(N,M)}$ is the maximum possible bond dimension of the bipartition, and $\sigma_{n}$ denotes the $n$-th singular value of the target state $|\Phi\rangle$.
Here, $s^{(D)}_{n}$ represents the $n$-th largest value in the set $\mathcal{S}_{D}=\{|\langle n,n|\hat{U}^{\dagger}\otimes\hat{V}^{\dagger}|\Phi\rangle|\}^{D}_{n=1}$, calculated for the optimized gates $\hat{U}$ and $\hat{V}$.
Mathematical details regarding this inequality and the definition of the set $\mathcal{S}_{D}$ are provided in Appendix~\ref{app:details_of_global_optimization}.
As this inequality suggests, provided the unitary gates $\hat{U}$ and $\hat{V}$ possess sufficient expressivity, the maximum of $I(\hat{U},\hat{V})$ converges to the weighted sum of the true singular values $\sigma_{n}$ with coefficients $w_{n}$.
In this limit, the values $s^{(D)}_{n}$ faithfully approximate the true singular values $\sigma_{n}$.

\subsection{Demonstration for a 1D Heisenberg model}
\label{subsec:full_svd_demo}


\begin{figure*}[htbp]
\centering
\includegraphics[width=1.4\columnwidth]{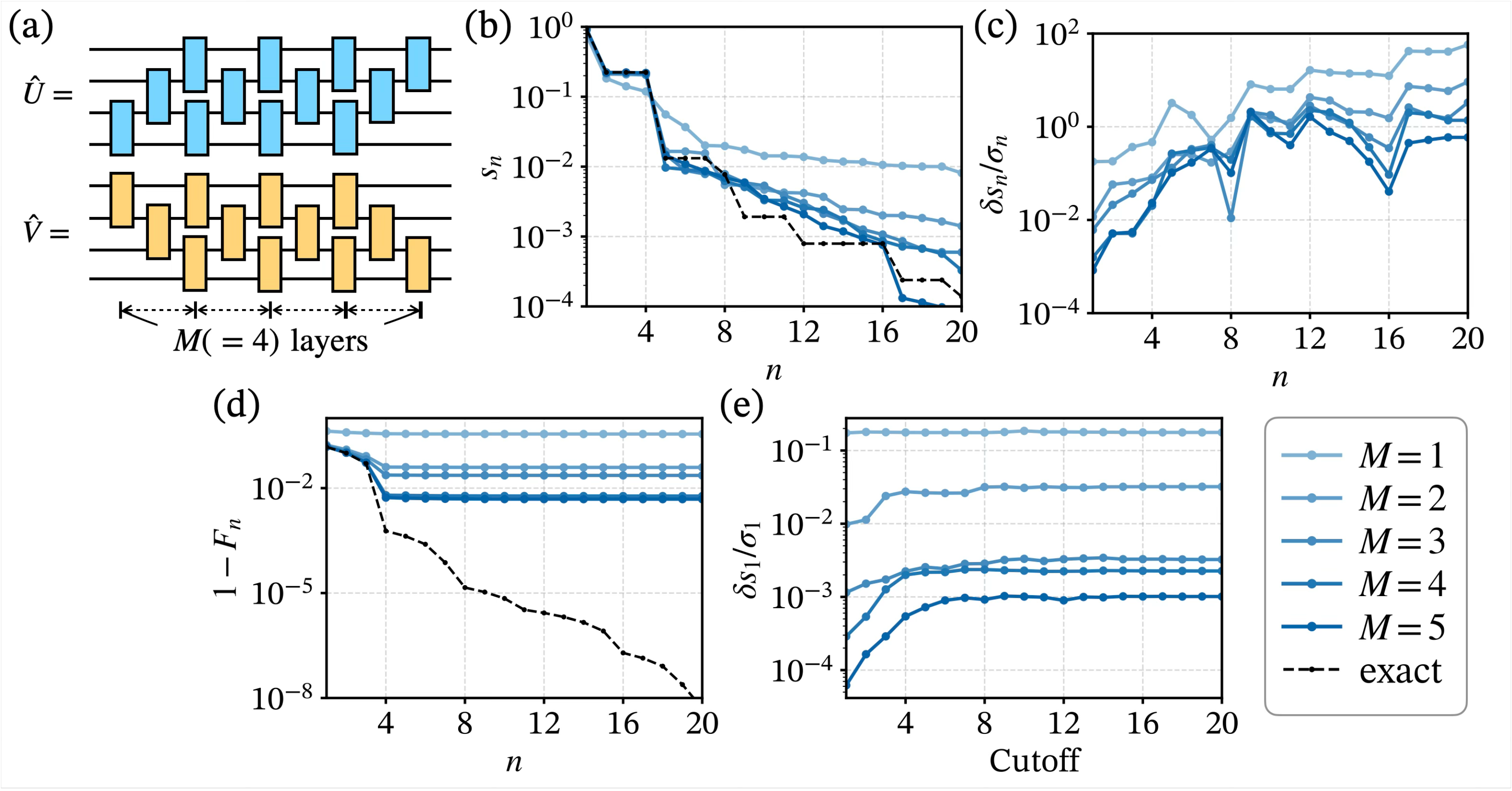}
\caption{
Numerical results of the variational Schmidt decomposition algorithm for the 1D Heisenberg model ($L=16$).
(a) Structure of the variational unitary gates $\hat{U}$ and $\hat{V}$ used for singular value estimation. Each gate is an $M$-layer quantum circuit consisting of sequentially aligned two-qubit gates.
(b) and (c) Singular value distribution $s_{n}$ and its relative error $\delta s_{n}/\sigma_{n}$, respectively.
(d) Infidelity, $1-F_{n}$, where $F_{n}=\sum^{n}_{m=1}s_{m}^2$.
(e) Relative error of the largest singular value, $\delta s_{1}/\sigma_{1}$, as a function of the cutoff.
The results in panels (b)-(e) are calculated using quantum circuits with varying depths up to $M=6$.
}
\label{fig:vqsvd-demo1}
\end{figure*}


Next, to verify the efficacy of our Schmidt decomposition method, we apply it to the ground state of the 1D Heisenberg model,
\begin{align}
\hat{H}=J\sum^{L-1}_{n=1}
\hat{\bm{S}}_{n}\cdot\hat{\bm{S}}_{n+1}
\end{align}
where $\hat{\bm{S}}_{n}$ is the spin-$1/2$ operator acting on the $n$-th qubit.
Here, we assume the system size $L=16$ and the coupling constant $J=1$ in our calculation.
To maximize the objective function $I(\hat{U},\hat{V})$, we employ a sequential sweep update algorithm for the local unitary gates constituting the circuit $\hat{G}=\hat{\mathcal{G}}_{N}\cdots\hat{\mathcal{G}}_{1}$ \cite{shirakawa_Automatic_2024,miyakoshi_Diamondshaped_2024}.
This algorithm is based on the iterative optimization of each local gate $\hat{\mathcal{G}}_{n}$ by considering the metric tensor $\hat{\mathcal{F}}_{n}$,
\begin{align}
\hat{\mathcal{F}}_{n}=\mathrm{Tr}_{\bar{n}}[\hat{\mathcal{G}}_{n+1}^{\dagger}\cdots|\Phi\rangle\langle 0|\hat{\mathcal{G}}_{1}^{\dagger}\cdots\hat{\mathcal{G}}_{n-1}]
\end{align}
where $\mathrm{Tr}_{\bar{n}}$ denotes the trace over the subspace complementary to the space upon which $\hat{\mathcal{G}}_{n}$ acts. The local update that maximizes the global objective function is derived from the SVD of this tensor, $\hat{\mathcal{F}}_{n}=\hat{X}\hat{D}\hat{Y}^{\dagger}$, yielding the optimal gate $\hat{\mathcal{G}}_{n}\leftarrow\hat{Y}\hat{X}^{\dagger}$.
Additionally, we assume that the unitary gates $\hat{U}$ and $\hat{V}$ are each composed of an $M$-layer quantum circuit  shown in Fig.~\ref{fig:vqsvd-demo1}(a).
This circuit consists of sequentially aligned 2-qubit gates applied first within subsystem $A$, and then within subsystem $B$.
The sweep update was repeated until the relative change in the objective function per sweep fell below $10^{-12}$, or until a maximum of $10^{5}$ sweep updates was reached.
The coefficients $w_{n}$ of the reference gate $\hat{W}$ follow an exponential decay defined by $w_{n}=pw_{n-1}$ with $p=0.9$, where the prefactor $w_{1}$ is determined by the normalization condition $\sum^{D}_{n=1}w_{n}^2=1$.
Here, $p$ is chosen to be close to $1$, so that the optimization remains sensitive to the entire spectrum, while still being smaller than $1$ to lift the degeneracy among the target singular vectors. The converged results are not sensitive to its precise value in this regime.

First, as a benchmark of our method's basic performance, we show the singular value distribution $s_{n}=s^{(D)}_{n}$ and its relative error $\delta s_{n}/\sigma_{n}=|1-s_{n}/\sigma_{n}|$ calculated for each quantum circuit with $M$ layers in Figs.~\ref{fig:vqsvd-demo1}(b) and (c), respectively.
The figures illustrate that the singular values $s_{n}$ approach the exact values $\sigma_{n}$ as $M$ increases. This trend is particularly prominent for the larger singular values.
Next, to evaluate the accuracy of the reproduced singular values, we plot the infidelity $1-F_{n}$ in Fig.~\ref{fig:vqsvd-demo1}(d), where $F_{n}=\sum^{n}_{m=1}(s_{m})^{2}$ is the cumulative sum of the squared singular values.
The fidelity $F_{n}$ serves as an indicator of the quantum circuit's expressivity; ideally, $F_{n}$ should converge to 1, corresponding to an infidelity of $0$.
As shown in the figure, the plotted infidelity $1-F_{n}$ saturates to a constant value as $n$ increases.
Crucially, as the circuit depth $M$ increases, the enhanced expressivity of the quantum circuit drives $F_{n}$ closer to 1, causing this saturation value of $1-F_{n}$ to approach $0$.
This successfully demonstrates that highly expressive unitary gates $\hat{U}$ and $\hat{V}$ improve the accuracy of the individual singular values and reduce the off-diagonal overlaps $|\langle k,l|\hat{U}^{\dagger}\otimes\hat{V}^{\dagger}|\Phi\rangle|$.
However, because this method targets the entire spectrum simultaneously,
 there is a concern that the circuit's representational power is distributed too broadly across all singular values, rather than being concentrated on the most dominant ones.
We therefore introduce a cutoff $c$ to the coefficients $w_{n}$, renormalize the truncated coefficients as $w_{n}\leftarrow w_{n}/\sqrt{\sum^{c}_{m=1}w_{m}^2}$, and perform a refined optimization.
Figure~\ref{fig:vqsvd-demo1}(e) shows the relative error of the largest singular value, $\delta s_{1}/\sigma_{1}$, as a function of cutoff.
As can be seen from the figure, reducing the cutoff brings the largest singular value closer to its exact value. This result suggests that in situations with limited gate depth, such as on current NISQ devices, the difficulty of representing the entire singular vector space with unitary gates becomes evident.
By restricting the target singular vectors via the cutoff, the precision of individual singular values can be enhanced.

Furthermore, as these results suggest, simply approximating the unitary matrices for the SVD as quantum circuits becomes challenging as the ground state complexity increases, a trend further demonstrated for the 2D system in Fig.~\ref{fig:vqsvd-demo3}(e).
This difficulty is equivalent to the limitation in circuit expressivity in previous studies on variational algorithms for Schmidt decomposition and SVD~\cite{larose_Variational_2019,wang_Variational_2021}, which typically aim to approximate the entire set of singular vectors.
While this approach may suffice for estimating averaged properties like entanglement entropy, it poses a significant problem for accurately estimating individual singular values.
In particular, the deep quantum circuits required for improving the accuracy of their variational quantum algorithms pose severe challenges to the optimization process and lead to error amplification on current quantum hardware.

\section{Partial SVD Algorithms}
\label{sec:partial_svd}


\newlength{\colWone}
\setlength{\colWone}{0.18\textwidth}
\newlength{\colWtwo}
\setlength{\colWtwo}{0.31\textwidth}

\begin{table*}[htbp]
\caption{Summary of the four variational SVD algorithms considered in this study.}
\label{tab:method-summary}
\renewcommand{\arraystretch}{1.5}
\begin{tabular}{|l|l|l|l|}
\hline
\textbf{Method}&
\textbf{Concept}&
\textbf{Objective Function}&
\textbf{Singular Values}\\
\hline
\hline

full&
\begin{minipage}[t]{\colWone}\flushleft
  Single Global Optimization
\vspace{1mm}
\end{minipage}&
\begin{minipage}[t]{\colWtwo}\flushleft
  Maximize the global overlap: \newline
  $I(D) = \sum^{D}_{n=1}w_{n}\mathrm{Re}\langle n,n|\hat{U}^{\dagger}\otimes\hat{V}^{\dagger}|\Psi\rangle$. \newline
  \textbf{Target:} Global gates $\hat{U}, \hat{V}$.
\vspace{1mm}
\end{minipage}&
\begin{minipage}[t]{\colWtwo}\flushleft
  Estimated as the $n$-th largest element of the diagonal overlaps: \newline
  $s_{n}=s^{[D]}_{n} \in \{|\langle k,k|\hat{U}^{\dagger}\otimes\hat{V}^{\dagger}|\Psi\rangle|\}_{k=1}^{D}$.
\vspace{1mm}
\end{minipage}\\
\hline

partial&
\begin{minipage}[t]{\colWone}\flushleft
  Multiple Global Optimizations \newline
  + Difference
\vspace{1mm}
\end{minipage}&
\begin{minipage}[t]{\colWtwo}\flushleft
  Maximize the global overlaps: \newline
  $I(c) = \sum^{c}_{n=1}w_{n}\mathrm{Re}\langle n,n|\hat{U}^{\dagger}\otimes\hat{V}^{\dagger}|\Psi\rangle$ \newline
  for two distinct cutoffs $c$ and $c-1$. \newline
  \textbf{Target:} Global gates $\hat{U}, \hat{V}$.
\vspace{1mm}
\end{minipage}&
\begin{minipage}[t]{\colWtwo}\flushleft
  Estimated via the difference of the cumulative sums of $s^{[c]}_{n}$: \newline
  $s_{c}=\sum^{c}_{n=1} s^{[c]}_{n}-\sum^{c-1}_{n=1} s^{[c-1]}_{n}$, \newline
  for two distinct cutoffs $c$ and $c-1$.
\vspace{1mm}
\end{minipage}\\
\hline

simple&
\begin{minipage}[t]{\colWone}\flushleft
  Sequential Local Optimizations \newline
  + Deflation
\vspace{1mm}
\end{minipage}&
\begin{minipage}[t]{\colWtwo}\flushleft
  Maximize the local overlap with the projected state $|\Psi_{n-1}\rangle \propto (\hat{I}-\hat{P}_{n-1})|\Psi\rangle$: \newline
  $I_{n}=\mathrm{Re}\langle u_{n},v_{n}|\Psi_{n-1}\rangle$, \newline
  using the \textbf{naive projector} \newline
  $\hat{P}_{n-1}=\sum_{m=1}^{n-1}|u_{m},v_{m}\rangle\langle u_{m},v_{m}|$. \newline
  \textbf{Target:} Local gates $\hat{U}_n, \hat{V}_n$.
\vspace{1mm}
\end{minipage}&
\begin{minipage}[t]{\colWtwo}\flushleft
  Estimated as the positive part of the real overlap: \newline
  $s_{n}=\max(\mathrm{Re}\langle u_{n},v_{n}|\Psi\rangle, 0)$, \newline
  assuming naive mutual orthogonality: \newline
  $\langle u_{n}|u_{m}\rangle=\langle v_{n}|v_{m}\rangle=\delta_{n,m}$.
\vspace{1mm}
\end{minipage}\\
\hline

improved&
\begin{minipage}[t]{\colWone}\flushleft
  Sequential Local Optimizations \newline
  + Deflation \newline
  + Orthogonality Correction
\vspace{1mm}
\end{minipage}&
\begin{minipage}[t]{\colWtwo}\flushleft
  Maximize the local overlap with the projected state $|\Psi_{n-1}\rangle \propto (\hat{I}-\hat{P}_{n-1})|\Psi\rangle$: \newline
  $I_{n}=\mathrm{Re}\langle u_{n},v_{n}|\Psi_{n-1}\rangle$, \newline
  using the \textbf{corrected projector} \newline
  $\hat{P}_{n-1}=\hat{P}^{A}_{n-1}\otimes\hat{P}^{B}_{n-1}$. \newline
  \textbf{Target:} Local gates $\hat{U}_n, \hat{V}_n$.
\vspace{1mm}
\end{minipage}&
\begin{minipage}[t]{\colWtwo}\flushleft
  Estimated from the SVD of the reduced and corrected overlap matrix: \newline
  $\Sigma^{(n)}_{k,l}=\langle\chi^{A}_{k},\chi^{B}_{l}|\Psi\rangle$, \newline
  where $|\chi^{A}_{k}\rangle$ and $|\chi^{B}_{l}\rangle$ are eigenvectors of $\hat{P}^{A}_{n-1}$ and $\hat{P}^{B}_{n-1}$, respectively.
\vspace{1mm}
\end{minipage}\\
\hline

\end{tabular}
\end{table*}


As discussed in Sec.~\ref{sec:full_svd}, variational algorithms that attempt to approximate the entire set of singular vectors simultaneously face severe expressivity bottlenecks.
The deep quantum circuits required for such global optimization inevitably lead to significant optimization difficulties and error amplification on near-term quantum hardware.
To overcome these inherent limitations of the full SVD method, we propose partial SVD algorithms aimed at accurately and efficiently extracting the dominant singular values.
By shifting the optimization target from the entire spectrum to specific dominant components, these approaches significantly reduce circuit depth and improve numerical accuracy.

To provide a clear roadmap, Table~\ref{tab:method-summary} summarizes the four variational SVD frameworks evaluated in this study, outlining their core concepts, objective functions, and the specific estimators used to extract singular values.
In the following subsections, we systematically formulate these methods and benchmark their performance.

\subsection{Full and Partial Optimization}
\label{subsec:partial_svd_alg1}

First, we define an estimator for the $k$-th singular value, $s_{k}$, as the difference between the cumulative sums obtained from two separate optimizations performed with different cutoffs, $k$ and $k-1$:
\begin{align}
s_{k} = \left(\sum^{k}_{n=1}(s^{(k)}_{n})^{p}
-\sum^{k-1}_{n=1}(s^{(k-1)}_{n})^{p}\right)^{1/p},
\label{eq:partial_svd}
\end{align}
where $s^{(k)}_{n}$ denotes the $n$-th singular value optimized with cutoff $k$ following the notation introduced in the previous section.
Note that the term inside the parentheses in Eq.~\eqref{eq:partial_svd} can become non-positive due to numerical fluctuations in the optimization.
In such cases, we treat $s_{k}$ as undefined.
For numerical robustness, we choose $p=1$, although other variants are discussed in Appendix~\ref{app:details_of_partial_optimization}.
As suggested by our previous results in Fig.~\ref{fig:vqsvd-demo1}(e), optimizations with a smaller cutoff yield higher precision.
Therefore, this difference-based estimation, anchored by such high-precision calculations, is expected to enhance the accuracy of the dominant singular values.
Crucially, this method offers a fundamental advantage in mitigating measurement complexity.
While optimizing for the entire spectrum requires measuring and sorting all individual singular values, this approach strictly requires evaluating only the cumulative sums up to the cutoff, thereby drastically reducing the measurement overhead.
To distinguish between these two paradigms, we hereafter refer to this difference-based method as \textit{partial} optimization and the aforementioned method targeting the entire spectrum as \textit{full} optimization.

\subsection{Simple and Improved Deflation}
\label{subsec:partial_svd_alg2}

To further surpass the limits of single-circuit expressivity, we propose two alternative deflation-based partial SVD algorithms, which employ iterative quantum circuit optimization guided by projection operators.
Specifically, we formulate the optimization problem using projection operators $\hat{P}_{n-1}$ constructed from the unitary gates optimized in the preceding steps, $\hat{U}_{n-1}$ and $\hat{V}_{n-1}$ (the explicit construction of these operators is detailed below).
The optimization iteration is given by,
\begin{align}
&I_{n}(\hat{U}_{n},\hat{V}_{n})=\mathrm{Re}\langle 0|\hat{U}^{\dagger}_{n}\otimes\hat{V}^{\dagger}_{n}|\Phi_{n-1}\rangle,\\
&|\Phi_{n-1}\rangle=\alpha_{n-1}^{-1}(\hat{I}-\hat{P}_{n-1})|\Phi\rangle,
\end{align}
where $\alpha_{n-1}$ is the normalization factor for the projected residual state $|\Phi_{n-1}\rangle$.
This method approximates $|\Phi_{n-1}\rangle$ with the product state $\hat{U}_{n}\otimes\hat{V}_{n}|0\rangle$.
This is mathematically equivalent to optimizing the quantum circuits to capture the largest singular value of the residual state $|\Phi_{n-1}\rangle$, implicitly aligning the global phase to yield a positive real overlap.
By sequentially determining $\hat{U}_{n}$ and $\hat{V}_{n}$ to maximize $I_{n}(\hat{U}_{n},\hat{V}_{n})$, we systematically optimize the circuits $\hat{U}_{n}$ and $\hat{V}_{n}$ corresponding to the left and right singular vectors.
This methodology shares its sequential extraction philosophy with the Variational Quantum Deflation (VQD) algorithm~\cite{higgott_Variational_2019a} used for excited-state searches, offering a stark contrast to our full optimization method, which employs a simultaneous optimization approach analogous to the Subspace-Search Variational Quantum Eigensolver (SSVQE)~\cite{nakanishi_Subspacesearch_2019}.
Our aim is to enhance the accuracy of individual singular values by iteratively assigning dedicated quantum circuits to each principal component.

To implement this deflation procedure, we consider two distinct constructions of the projection operators.
The first approach, termed \textit{simple} deflation, constructs the projection operator directly from the obtained quantum states $|u_{n}\rangle=\hat{U}_{n}|0\rangle$ and $|v_{n}\rangle=\hat{V}_{n}|0\rangle$ as follows:
\begin{align}
\hat{P}_{n}=&\sum^{n}_{m=1}|u_{m},v_{m}\rangle\langle u_{m},v_{m}|.
\end{align}
Using this operator, the target singular value can be simply defined as $s_{n}=\mathrm{Re}\langle u_{n},v_{n}|\Phi\rangle$.
However, this naive formulation relies on the strict assumption that the vector sets $\{|u_{n}\rangle\}$ and $\{|v_{n}\rangle\}$ are perfectly orthonormal.
In practice, due to the limited expressivity of shallow quantum circuits and inherent hardware noise, this mutual orthogonality is rarely satisfied, leading to severe numerical instabilities.

To fundamentally resolve this orthogonality breakdown, we introduce the \textit{improved} deflation algorithm.
Unlike simple deflation, this method explicitly corrects for the non-orthogonality of the variational states via classical post-processing of the overlap matrices.
It utilizes a refined projection operator defined as:
\begin{align}
\hat{P}_{n}=\hat{P}^{A}_{n}\otimes\hat{P}^{B}_{n}
\end{align}
where the projection operators $\hat{P}^{A}_{n}$ and $\hat{P}^{B}_{n}$ for the respective vector sets $\{|u_{n}\rangle\}$ and $\{|v_{n}\rangle\}$ are constructed as:
\begin{align}
\hat{P}^{A}_{n}=&\sum^{n}_{k,l=1}|u_{k}\rangle
\langle u_{l}|[A^{+}_{n}]_{k,l},\\
\hat{P}^{B}_{n}=&\sum^{n}_{k,l=1}|v_{k}\rangle
\langle v_{l}|[B^{+}_{n}]_{k,l}.
\end{align}
Here, $A^{+}_{n}$ and $B^{+}_{n}$ denote the pseudo-inverses of the overlap matrices: $A_{n}=(\langle u_{k}|u_{l}\rangle)^{n}_{k,l=1}$ and $B_{n}=(\langle v_{k}|v_{l}\rangle)^{n}_{k,l=1}$, respectively.
While $A_{n}$ and $B_{n}$ are Hermitian with non-negative eigenvalues, they inherently become ill-conditioned or singular as the variationally obtained states lose mutual orthogonality.
To ensure numerical stability, these pseudo-inverses are computed via spectral decomposition:
\begin{align}
[A^{+}_{n}]_{k,l}=&
\sum^{n}_{m=1}\tau_{\epsilon}(d^{A}_{m})U^{A}_{k,m}(U^{A}_{l,m})^{*},
\\
[B^{+}_{n}]_{k,l}=&
\sum^{n}_{m=1}\tau_{\epsilon}(d^{B}_{m})U^{B}_{k,m}(U^{B}_{l,m})^{*},
\end{align}
where $d^{A}_{m}$ and $d^{B}_{m}$ are the eigenvalues of $A_{n}$ and $B_{n}$, respectively.
The threshold function is defined as $\tau_{\epsilon}(x)=1/x$ for $x>\epsilon$ and $0$ otherwise.
This threshold $\epsilon$ acts as a numerical regularization, filtering out noise originating from optimization errors and stabilizing the inversion of ill-conditioned overlap matrices (in our numerical demonstrations, we set $\epsilon=10^{-12}$).
Using these corrected projection operators, the resulting singular values are extracted from the singular values of the core matrix $\Sigma^{(n)}$:
\begin{align}
\Sigma^{(n)}_{k,l}=&
\tau_{\epsilon}(d^{A}_{k})^{\frac{1}{2}}
\tau_{\epsilon}(d^{B}_{l})^{\frac{1}{2}}
\nonumber\\&\times
\sum^{n}_{k',l'=1}(U^{A}_{k',k})^{*}(U^{B}_{l',l})^{*}
\langle u_{k'},v_{l'}|\Phi\rangle.
\label{eq:corrected_core_matrix}
\end{align}
It is worth noting that the projected state, which provides a high-fidelity approximation of the target state $|\Phi\rangle$, can be explicitly reconstructed as
\begin{align}
|\Phi\rangle\simeq\hat{P}_{n}|\Phi\rangle=&
\sum_{k}\sum_{l}|\chi^{A}_{k},\chi^{B}_{l}\rangle\Sigma^{(n)}_{k,l},
\end{align}
where $|\chi^{A}_{k}\rangle$ and $|\chi^{B}_{k}\rangle$ are the exactly orthonormalized eigenvectors of the projection operators $\hat{P}^{A}_{n}$ and $\hat{P}^{B}_{n}$, given by
\begin{align}
|\chi^{A}_{k}\rangle=&
\sum^{n}_{k'=1}|u_{k'}\rangle U^{A}_{k,k'}\tau_{\epsilon}(d^{A}_{k'})^{\frac{1}{2}},
\\
|\chi^{B}_{k}\rangle=&
\sum^{n}_{k'=1}|v_{k'}\rangle U^{B}_{k,k'}\tau_{\epsilon}(d^{B}_{k'})^{\frac{1}{2}}.
\end{align}

\subsection{Demonstration for a 1D Heisenberg model}
\label{subsec:partial_svd_demo1}


\begin{figure*}[htbp]
\centering
\includegraphics[width=2.0\columnwidth]{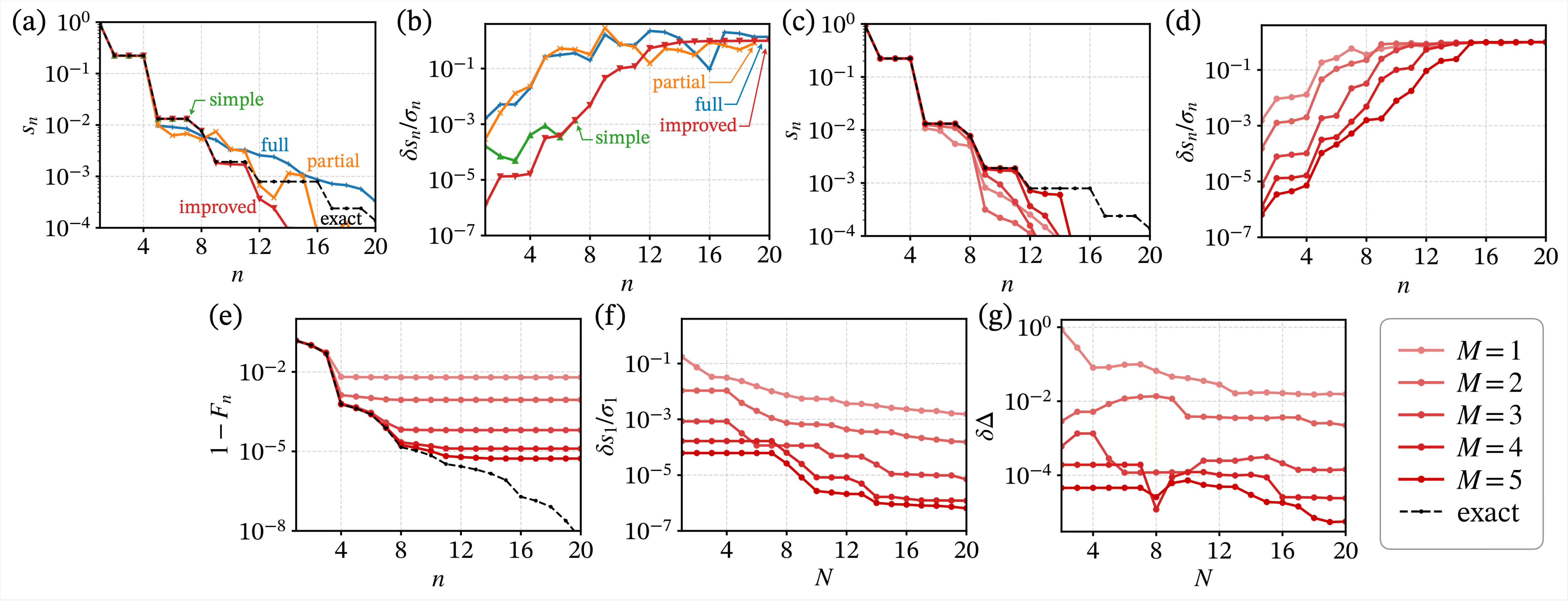}
\caption{
Performance benchmarks of four variational SVD algorithms for the 1D Heisenberg model ($L=16$).
(a) Singular values and (b) their relative errors with the number of layers fixed to $M=4$.
(c)-(g) Detailed results obtained via the improved deflation algorithm, evaluated for varying numbers of layers up to $M=5$.
(c)-(e) Singular values $s_{n}$, relative errors $\delta s_{n}/\sigma_{n}$, and cumulative squared sums $F_{n}$.
(f)(g) Relative error of the largest singular value and absolute error in the Schmidt gap $\delta\Delta$ where $\Delta=\xi_{2}-\xi_{1}$ as a function of the number of deflation steps $N$, shown for various numbers of layers $M$ up to $5$.
}
\label{fig:vqsvd-demo2}
\end{figure*}


To demonstrate the performance of the proposed algorithms, we apply the four variational SVD methods---full optimization, partial optimization, simple deflation, and improved deflation---to the 1D Heisenberg model.
Figures~\ref{fig:vqsvd-demo2}(a) and (b) compare the singular values and their relative errors obtained by each approach, with the circuit depth fixed to $M=4$.
For the improved deflation algorithm, the circuits are sequentially optimized up to $N=20$ steps.
As shown in Figs.~\ref{fig:vqsvd-demo2}(a) and (b), while all methods generally capture the overall singular value distribution, the two deflation-based algorithms achieve markedly higher precision.
Focusing on the relative error $\delta s_{n}/\sigma_{n}$, the partial optimization algorithm accurately reproduces the dominant singular value $s_{1}$, as intended by its formulation. 
However, its accuracy for subdominant singular values degrades rapidly, converging to the performance of the full optimization method.
This behavior exposes a fundamental limitation: even if a cutoff concentrates optimization power on a specific singular value, the total expressivity of a single circuit remains bounded, inherently restricting its capacity to simultaneously resolve multiple orthogonal components.
While the simple deflation algorithm provides a reasonable approximation of the dominant singular values, its precision remains inferior to that of improved deflation.
Furthermore, as anticipated, simple deflation exhibits severe numerical instability for smaller singular values, occasionally outputting unphysical negative values due to the progressive breakdown of mutual orthogonality.
These comparisons demonstrate that the improved deflation algorithm yields the highest accuracy and stability for the dominant singular values among all evaluated methods.
The explicit orthogonality correction via pseudo-inverses proves indispensable for robust entanglement spectrum estimation.

Next, we delve into the detailed performance of the improved deflation algorithm across varying circuit depths $M$, as analyzed in Figs.~\ref{fig:vqsvd-demo2}(c)-(e).
Naturally, deep circuits ($M\ge 4$) possess sufficient expressivity to accurately represent the singular vectors, enabling the successful extraction of numerous singular values [Fig.~\ref{fig:vqsvd-demo2}(c)].
Conversely, shallow circuits inherently lack the expressivity to span the full relevant subspace.
Consequently, increasing the deflation steps $N$ does not continuously yield new effective orthogonal vectors for the subdominant components.
Because the limited expressivity prevents the circuits from reaching these smaller singular values, the relative error for these components fails to decrease [Fig.~\ref{fig:vqsvd-demo2}(d)], and the infidelity $1-F_{n}$ plateaus at a value significantly above $0$ [Fig.~\ref{fig:vqsvd-demo2}(e)].
Crucially, however, despite this saturation, the infidelity $1-F_{n}$ achieved by improved deflation using these shallow circuits remains equivalent to, or strictly lower than, the minimum infidelity obtained by the full optimization algorithm using significantly deeper circuits ($M=4,5$), as previously shown in Fig.~\ref{fig:vqsvd-demo1}(c).

However, this expressivity bound does not render shallow circuits obsolete within our framework.
As illustrated in Fig.~\ref{fig:vqsvd-demo2}(f), increasing the deflation steps $N$ systematically enhances the precision of the already-obtained dominant singular value $s_{1}$, even when using minimally expressive shallow circuits.
This phenomenon highlights the core mechanism of our approach: the classical post-processing acts as an error filtering process.
Initial states extracted by shallow circuits inevitably exhibit spurious overlaps with other orthogonal singular vectors due to their constrained parameter space.
As $N$ increases and subsequent vectors are extracted, the classical orthogonality correction explicitly identifies and mathematically filters out these non-orthogonal components from the dominant subspace, thereby continuously refining the estimate of $s_{1}$.
Thus, the improved deflation algorithm actively compensates for suboptimal quantum circuit expressivity.
Crucially, because the required overlap matrices for such shallow circuits can be efficiently evaluated via classical TN methods, this filtering is achieved entirely without quantum measurement overheads.
For highly expressive deep circuits, the initial residual is already minimal, making this post-processing filtering less prominent but equally robust.

As a direct consequence of this progressive filtering, the structural integrity of the Schmidt decomposition is gradually restored.
Figure~\ref{fig:vqsvd-demo2}(g) demonstrates that the absolute error of the Schmidt gap, $\Delta=\xi_{2}-\xi_{1}$ (where the entanglement spectrum is defined as $\xi_{n}=-2\ln{\sigma_{n}}$), exhibits an overall decreasing trend as $N$ increases.
This confirms that the orthogonality correction successfully resolves the differences between dominant components, enabling a reliable evaluation of the Schmidt gap even under severe circuit constraints.
As demonstrated in Appendix~\ref{app:benchmark_on_1d_spin_ladder}, this capability is highly practical, allowing for the precise detection of entanglement spectrum degeneracies---a hallmark signature of topologically non-trivial phases~\cite{pollmann_Entanglement_2010,miyakoshi_Entanglement_2016}.

\subsection{Demonstration for a 2D Heisenberg model}
\label{subsec:partial_svd_demo2}


\begin{figure*}[htbp]
\centering
\includegraphics[width=2.0\columnwidth]{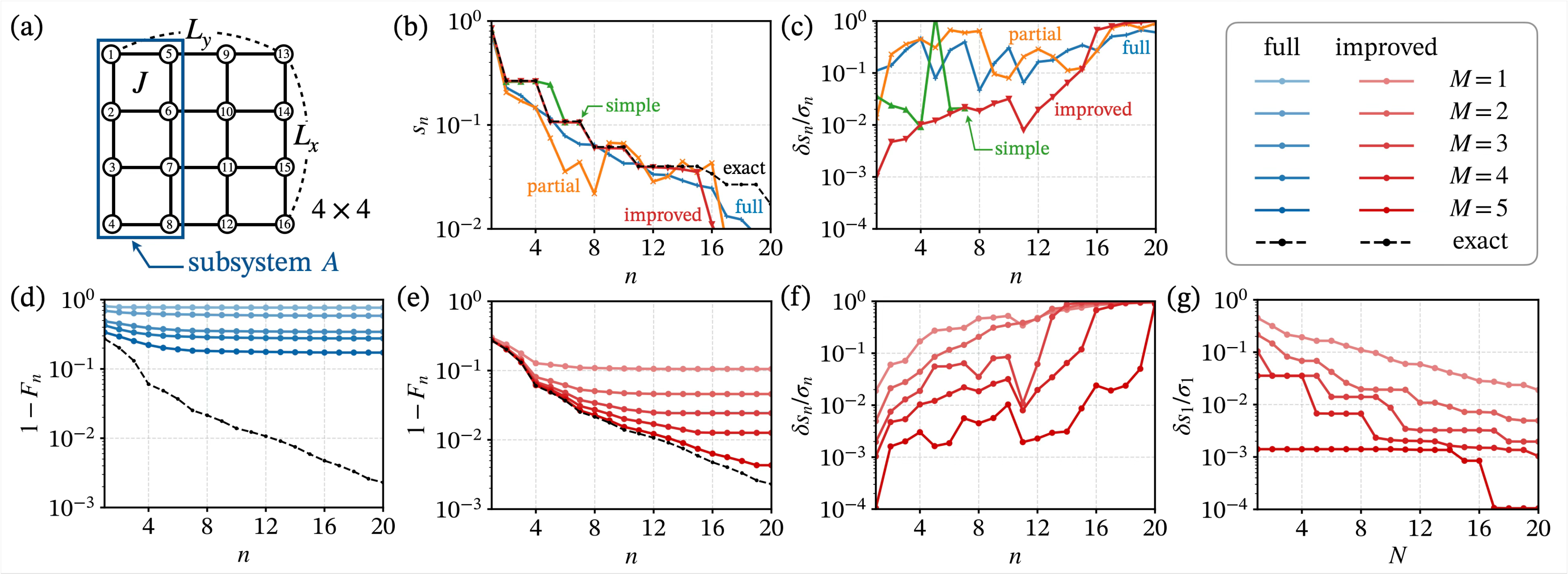}
\caption{
Performance benchmarks of partial SVD algorithms for the 2D Heisenberg model with system size $(L_x,L_y)=(4,4)$.
(a) Schematic representation of the 2D lattice of the Heisenberg model, where each site is labeled by its site index and the rectangular region enclosed by the blue line denotes the subsystem $A$.
(b)(c) Singular value distribution $s_{n}$ and its relative error $\delta s_{n}/\sigma_{n}$, respectively, obtained via the four proposed algorithms with the number of layers  fixed to $M=4$.
(d)-(g) Detailed results evaluated for varying  numbers of layers up to $M=5$.
(d)(e) Comparison of the cumulative squared sums $F_{n}$ between the full optimization (d) and the improved deflation algorithms (e).
(f) Relative error of singular values for the improved deflation algorithm.
(g) Relative error of the largest singular value as a function of deflation steps $N$ for the improved deflation algorithm.
}
\label{fig:vqsvd-demo3}
\end{figure*}


To further validate the scalability and robustness of our approach, we apply the four proposed methods to the Heisenberg model on a 2D square lattice of size $(L_{x},L_{y})=(4,4)$ with open boundary conditions [Fig.~\ref{fig:vqsvd-demo3}(a)].
The system is governed by the Hamiltonian:
\begin{align}
\hat{H}=& J\sum^{L_{x}-1}_{n=1}\sum^{L_{y}}_{m=1}
\hat{\bm{S}}_{n,m}\cdot\hat{\bm{S}}_{n+1,m}
+J\sum^{L_{x}}_{n=1}\sum^{L_{y}-1}_{m=1}
\hat{\bm{S}}_{n,m}\cdot\hat{\bm{S}}_{n,m+1},
\end{align}
where $\hat{\bm{S}}_{n,m}$ is the spin-$1/2$ operator acting on a qubit located at the 2D spatial coordinate $(n,m)$, with $n\in\{1,\cdots,L_{x}\}$ and $m\in\{1,\cdots,L_{y}\}$.
The explicit mapping between these physical 2D coordinates and the 1D qubit site indices of our variational quantum circuits is illustrated in Fig.~\ref{fig:vqsvd-demo3}(a).
In our calculations, we set the uniform exchange coupling to $J=1$.
The bipartite cut defines subsystem $A$ as the rectangular region enclosed by the blue line.

Figures~\ref{fig:vqsvd-demo3}(b) and (c) display the singular value distribution and relative error, respectively, obtained at a fixed number of layers $M=4$.
Consistent with the 1D benchmark, the improved deflation algorithm achieves high precision and robustly maintains numerical stability deep into the subdominant singular values.

Next, we delve into the detailed performance of the algorithms for varying layers up to $M=5$ shown in Figs.~\ref{fig:vqsvd-demo3}(d)-(g), highlighting the fundamental differences between 1D and 2D systems.
A fundamental challenge in 2D systems is the structural mismatch between the 2D entanglement geometry and the aforementioned 1D structure of the variational quantum circuit employed here, which severely restricts the circuit expressivity.
This structural penalty is evident in the full optimization algorithm; as shown in Fig.~\ref{fig:vqsvd-demo3}(d), increasing circuit depth $M$ fails to reduce the infidelity $1-F_{n}$.
In contrast, the improved deflation algorithm circumvents this structural limitation [Fig.~\ref{fig:vqsvd-demo3}(e)].
Consequently, the infidelity $1-F_{n}$ achieved by the improved deflation algorithm with a single-layer circuit ($M=1$) and $N=20$ deflation steps drops below the minimum value obtained by the full optimization algorithm using a deep circuit of $M=5$.
By systematically accumulating multiple 1D quantum circuits through the deflation process, the framework effectively expands the total bond dimension, successfully capturing the 2D entanglement structure despite the restrictive 1D ansatz.

Beyond circuit geometry, the inherent broadness of the 2D entanglement spectrum [Fig.~\ref{fig:vqsvd-demo3}(b)] dictates a fundamentally different optimization landscape compared to the 1D chain.
In the 1D case, the steeply decaying singular values, combined with low circuit expressivity, leave a large residual, forcing the algorithm to perform orthogonality correction to accurately estimate the dominant singular values.
Conversely, in the 2D system, the singular value distribution is characteristically broad.
Faced with this dense spectrum, the objective function inherently prioritizes the exploration and extraction of new, comparatively large singular values over the immediate filtering of the already-obtained ones.
Consequently, as observed in Fig.~\ref{fig:vqsvd-demo3}(f), while the macroscopic broad distribution is successfully captured, the relative error for individual singular values does not drop as rapidly as in the 1D scenario.
This underlying mechanism is further corroborated by Fig.~\ref{fig:vqsvd-demo3}(g); to effectively filter out these non-orthogonal components and refine individual components such as $s_{1}$, the 2D system demands a large number of deflation steps $N$.
Furthermore, this demand for a large $N$ is fundamentally aligned with the physical requirements of TNs.
Since a shallow 1D circuit inherently lacks the bond dimension required for the 2D entanglement structure, increasing the deflation steps $N$ is physically essential to construct a sufficiently expressive linear combination of states.

These findings suggest that accurately resolving higher-dimensional entanglement structures necessitates a synergistic approach: increasing the deflation steps to aggregate effective bond dimension, ideally coupled with tailoring the quantum circuit ansatz to match the target geometry.
Nevertheless, it is a significant result that, even without problem-specific circuit engineering, the improved deflation algorithm robustly recovers the global entanglement properties of 2D systems, affirming its position as a versatile framework.

\section{Discussion on Scalability and Implementation}
\label{sec:discussion}

The numerical results presented in this study demonstrate that the improved deflation algorithm achieves high precision and numerical stability for evaluating the entanglement spectrum.
However, a critical consideration for practical applicability to near-term quantum hardware is the fundamental trade-off between measurement complexity and the severe constraints imposed by circuit depth and hardware noise.
While global optimization methods such as full and partial optimization struggle with this trade-off due to their inherent requirement for deep, noise-susceptible circuits, our deflation-based approaches---particularly the improved deflation algorithm---offer a scalable alternative.
In this section, we detail an efficient hybrid quantum-classical implementation of our algorithm and discuss how this specific architecture inherently mitigates these hardware limitations.


\begin{figure*}[htbp]
\centering
\includegraphics[width=1.6\columnwidth]{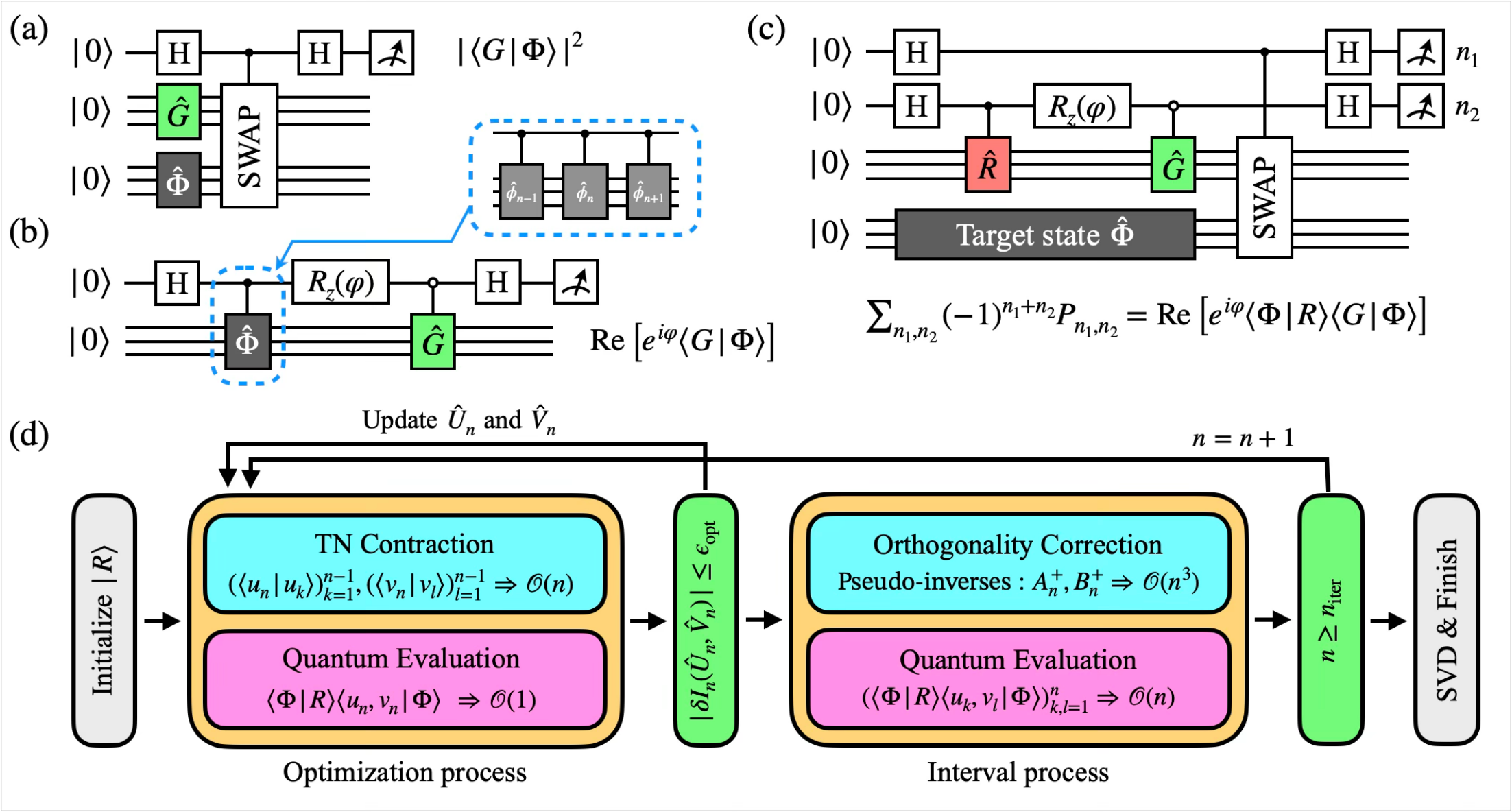}
\caption{
Comparison of hardware implementation strategies and the proposed hybrid architecture for improved deflation.
Here, $|\Phi\rangle=\hat{\Phi}|0\rangle$ represents the complex target state (composed of a sequence of numerous unitary gates: $\hat{\Phi}=\prod_{n}\hat{\phi}_{n}$), $\hat{G}$ denotes the parametrized variational circuit, and $|R\rangle=\hat{R}|0\rangle$ is a shallow reference state.
(a) Quantum circuit of the SWAP test for evaluating the absolute square $|\langle\Phi|G\rangle|^2$.
(b) Quantum circuit of the Hadamard test for evaluating the complex amplitude $\mathrm{Re}\left[e^{i\varphi}\langle G|\Phi\rangle\right]$.
Solid and open circles denote control conditioned on the ancilla states $|1\rangle$ and $|0\rangle$, corresponding to the operations $C_{1}$ and $C_{0}$, respectively.
As indicated by the blue dotted boxes connected by an arrow, naively implementing the controlled target state $C_{1}(\hat{\Phi})$ requires a prohibitive circuit depth, because it must be decomposed into a sequence of numerous controlled gates, $\prod_{n}C_{1}(\hat{\phi}_{n})$.
(c) Proposed quantum circuit diagram for evaluating the complex coefficient $\mathrm{Re}\left[e^{i\varphi}\langle\Phi|R\rangle\langle G|\Phi\rangle\right]$.
Note that this implementation avoids constructing controlled unitary gates for the complex target state $|\Phi\rangle$.
(d) Flowchart of the proposed hybrid quantum-classical algorithm for improved deflation.
The procedure consists of optimization and interval processes, where quantum circuit evaluations and classical tensor network contractions and matrix updates are executed concurrently to maximize computational efficiency and minimize hardware idle time.
}
\label{fig:qsvd-implementation}
\end{figure*}


\subsection{Hybrid Architecture for Improved Deflation}
\label{subsec:hybrid_implementation}

As discussed in Sec.~\ref{subsec:full_svd_alg}, global optimization methods (full and partial optimization) approximate a target set of singular vectors simultaneously by maximizing the absolute value of the overlap between the parametrized circuit state and the complex target state $|\Phi\rangle=\hat{\Phi}|0\rangle$.
Operationally, this cost function is evaluated using a standard SWAP test shown in Fig.~\ref{fig:qsvd-implementation}(a), which completely avoids the need to implement controlled versions of both the target and parametrized circuits.
It is worth noting that this approach exhibits a characteristic scaling in measurement complexity.
If we consider a generic parametrized circuit state $|G\rangle=\hat{G}|0\rangle$, the physically measured signal scales quadratically as $\eta^2$ when the overlap $\eta\sim|\langle G|\Phi\rangle|$ is small.
Resolving this quadratically suppressed signal against statistical shot noise requires a measurement count scaling as $\mathcal{O}(1/\eta^4)$.

In contrast, the deflation-based approach sequentially extracts singular vectors, where the subtraction process requires evaluating the overlap as a complex number to account for relative phase.
By directly evaluating the complex amplitude, one theoretically achieves linear sensitivity (proportional to $\eta$), which drastically reduces the required shot count to $\mathcal{O}(1/\eta^2)$.
A naive approach to evaluate such complex overlaps is the Hadamard test shown in Fig.~\ref{fig:qsvd-implementation}(b).
However, from a practical hardware perspective, this introduces a practical dilemma.
The Hadamard test necessitates implementing controlled unitary gates not only for the parametrized circuit $\hat{G}$ but also for the highly complex target circuit $\hat{\Phi}$.
Constructing the controlled version of the target state preparation requires a prohibitive circuit depth.
Specifically, because the target state is composed of a sequence of numerous unitary gates, $\hat{\Phi}=\prod_{n}\hat{\phi}_{n}$, its controlled version $C_{k}(\hat{\Phi})$---conditioned on an ancilla state $|k\rangle$ with $k\in\{0,1\}$---necessitates controlling every individual operation in sequence:
\begin{align}
C_{k}(\hat{\Phi})=\prod_{n}C_{k}(\hat{\phi}_{n}).
\end{align}
As illustrated by the blue dotted boxes in Fig.~\ref{fig:qsvd-implementation}(b), this multiplicative overhead causes severe noise accumulation on near-term hardware, effectively acting as a strong depolarizing channel with rate $\gamma$.
This introduces an attenuation factor $\lambda=1-\gamma\ll 1$, shrinking the practically measured signal to $\lambda\eta$.
Consequently, the required shot count $\mathcal{O}(1/(\lambda\eta)^2)$ can become prohibitively large in practice, substantially diminishing the theoretical advantage of linear sensitivity.

Therefore, the primary objective of this section is to construct a practical measurement scheme that preserves the linear sensitivity while strictly avoiding the controlled version of the complex target state.
To achieve this, we introduce an interferometric measurement scheme utilizing a shallow reference state $|R\rangle=\hat{R}|0\rangle$, implemented via the circuit in Fig.~\ref{fig:qsvd-implementation}(c).
By evaluating the interference between the target state $|\Phi\rangle$ and an ancilla-controlled superposition of the reference state $|R\rangle$ and the parametrized state $|G\rangle$, we can extract the complex coefficient using only the controlled versions of the parametrized circuit $\hat{G}$ and the reference $\hat{R}$.

To explicitly demonstrate how this complex coefficient is extracted without using the controlled version of the target state, let us trace the quantum state evolution of the interferometric circuit shown in Fig.~\ref{fig:qsvd-implementation}(c), where the first two qubits serve as ancilla qubits and the subsequent qubits correspond to the physical registers. 
The total operation acting on the initial state $|0\rangle^{\otimes 4}$ yields
\begin{widetext}
\begin{align}
&H_{1}H_{2}C_{1}(\mathrm{SWAP})_{1,(34)}C_{0}(\hat{G})_{2,3}R_{z}(\varphi)_{2}C_{1}(\hat{R})_{2,3}H_{2}H_{1}\hat{\Phi}_{4}|0\rangle^{\otimes 4} \nonumber
\\
=& \frac{1}{4}e^{-i\varphi/2}\Biggl[
|0,0\rangle\otimes\left(|G+e^{i\varphi}R,\Phi\rangle+|\Phi,G+e^{i\varphi}R\rangle\right)
+|1,0\rangle\otimes\left(|G+e^{i\varphi}R,\Phi\rangle-|\Phi,G+e^{i\varphi}R\rangle\right) \nonumber \\
&+|0,1\rangle\otimes\left(|G-e^{i\varphi}R,\Phi\rangle+|\Phi,G-e^{i\varphi}R\rangle\right)
+|1,1\rangle\otimes\left(|G-e^{i\varphi}R,\Phi\rangle-|\Phi,G-e^{i\varphi}R\rangle\right)
\Biggr],
\end{align}
\end{widetext}
where we employ the short-hand notation $|G\pm e^{i\varphi}R\rangle\equiv|G\rangle\pm e^{i\varphi}|R\rangle$ to represent the unnormalized linear superposition of state vectors.
Here, $C_{1}(\hat{U})_{i,j}$ denotes a standard controlled unitary gate acting on register $j$ controlled by ancilla qubit $i$, and $C_{0}(\hat{U})_{i,j}=X_{i}C_{1}(\hat{U})_{i,j}X_{i}$ denotes an anti-controlled unitary gate.
Let $P_{n_{1},n_{2}}$ be the measurement probability of obtaining the outcomes $n_{1},n_{2}\in\{0,1\}$ for the ancilla qubits.
These probabilities can be explicitly expressed as
\begin{align}
P_{n_{1},n_{2}}=&\frac{1}{8}\Bigl[
2+2(-1)^{n_{2}}\mathrm{Re}(e^{i\varphi}\langle G|R\rangle)
\nonumber\\
&+(-1)^{n_{1}}|\langle G+(-1)^{n_{2}}e^{-i\varphi}R|\Phi\rangle|^2
\Bigr].
\end{align}
Based on the parity of the measurement outcomes, we can evaluate the following expectation value:
\begin{align}
\sum_{n_{1},n_{2}}(-1)^{n_{1}+n_{2}}P_{n_{1},n_{2}}=&
\frac{1}{4}\sum_{\sigma=\pm 1}
\sigma|\langle G+\sigma e^{-i\varphi}R|\Phi\rangle|^2
\nonumber\\
=&\mathrm{Re}\left[e^{i\varphi}\langle\Phi|R\rangle\langle G|\Phi\rangle\right].
\end{align}
That is, by measuring with the rotation angle of the RZ gate set to $\varphi=0$ and $\varphi=\pi/2$, we can successfully reconstruct the complex coefficient $\langle \Phi|R\rangle\langle G|\Phi\rangle$.

To apply this measurement scheme to our partial SVD algorithm, we substitute the generic parametrized state $|G\rangle$ with the bipartite ansatz $|u_{n},v_{n}\rangle=\hat{U}_{n}\otimes\hat{V}_{n}|0\rangle$.
Using the complex coefficient extracted above, the cost function for the improved deflation can be reformulated as:
\begin{align}
I_{n}(\hat{U}_{n},\hat{V}_{n})=&\mathrm{Re}\Biggl[
\langle\Phi|R\rangle\langle u_{n},v_{n}|\Phi\rangle
-\sum^{n-1}_{k,k'=1}\sum^{n-1}_{l,l'=1}
\nonumber\\&\times
\langle u_{n}|u_{k}\rangle
\langle v_{n}|v_{l}\rangle
[A^{+}_{n-1}]_{k,k'}[B^{+}_{n-1}]_{l,l'}
\nonumber\\&\times
\langle\Phi|R\rangle\langle u_{k'},v_{l'}|\Phi\rangle
\Biggr].
\label{eq:improved_cost_with_reference}
\end{align}
Because the reference state $|R\rangle$ is strictly fixed throughout the algorithm, the inner product $\langle \Phi|R\rangle$ acts as a global constant complex scalar.
Consequently, optimizing this modified cost function perfectly reproduces the parameter gradients and relative phases required for the SVD.
To evaluate the actual singular values after optimization, one simply rescales the extracted complex coefficients $\langle\Phi|R\rangle\langle u_{k},v_{l}|\Phi\rangle$ by dividing them by the magnitude $|\langle\Phi|R\rangle|$, which is easily predetermined via a SWAP test.
Since any remaining global phase $e^{i\mathrm{arg}\langle\Phi|R\rangle}$ does not affect the singular values, this minimal post-processing exactly recovers the true singular value spectrum.

Turning to the practical implementation, a natural concern arises regarding the controlled circuits $C_{k}(\hat{G})$ and $C_{k}(\hat{R})$; deep circuits inevitably induce significant noise accumulation.
Here, the fundamental advantage of our improved deflation algorithm provides a natural solution.
Because our approach implements the deflation process combined with an explicit orthogonality correction, the variational ansatz remains shallow.
Consequently, the controlled version of the parametrized circuit $C_{k}(\hat{G})$ can be compressed to a depth executable on near-term quantum devices.
Crucially, this inherent shallowness also dictates the optimal strategy for evaluating the remaining terms in Eq.~\eqref{eq:improved_cost_with_reference}, namely the overlap matrices $\langle u_{n}|u_{k}\rangle$ and $\langle v_{n}|v_{l}\rangle$.

While these overlaps can be evaluated directly via quantum measurements, our hybrid architecture offloads their computation to classical TN contractions.
This architectural division is not merely a matter of convenience; it ensures that the overlap matrices and their pseudo-inverses $A^{+}_{n}$ and $B^{+}_{n}$ are computed with classical double-precision accuracy.
By protecting the overlap matrix elements from hardware noise, we can construct a robust projection framework.
This rigidly enforced orthogonality prevents the leakage of previously extracted components, which is essential for safely subtracting already-optimized singular vectors without error accumulation.
Furthermore, while the reference state overlap $|\langle \Phi | R \rangle|$ acts as an effective attenuation factor for the linearly sensitive signal, we can secure a non-vanishing baseline (e.g., $|\langle \Phi | R \rangle| \sim \mathcal{O}(1)$) by pre-optimizing a simple low-entanglement ansatz via the SWAP test.
Because $|R\rangle$ only needs to capture the dominant features of the target state rather than its exact correlations, this preliminary step ensures sufficient signal intensity and prevents error amplification when dividing by the overlap $|\langle\Phi|R\rangle|$, while keeping $C_{k}(\hat{R})$ relatively shallow.
Thus, this synergistic design---combining the proposed quantum measurement scheme with the classical TN offloading---effectively leverages the inherent circuit shallowness of improved deflation, providing a practical and noise-resilient framework.

\subsection{Computational Cost and Concurrent Execution}
\label{subsec:resource_analysis}

Having established the hybrid quantum-classical formulation of the improved deflation method, we now analyze its computational cost and scalability at each optimization step.
Figure~\ref{fig:qsvd-implementation}(d) illustrates the flowchart of this hybrid framework.
The overall algorithm consists of two primary tasks---the \textit{optimization} process, which updates the parametrized quantum circuits $\hat{U}_{n}$ and $\hat{V}_{n}$, and the \textit{interval} process, which prepares the classical orthogonality correction.
As shown in the figure, quantum and classical computation tasks are executed concurrently within each process.
In the following, we estimate the computational cost of these processes according to the workflow of our framework and clarify how this concurrent architecture supports practical scalability.

First, we initialize a reference state $|R\rangle$ that has a sufficient overlap with the target state $|\Phi\rangle$.
Subsequently, at the $n$-th deflation step, the optimization process iteratively updates the parametrized quantum circuits $\hat{U}_{n}$ and $\hat{V}_{n}$ according to the cost function formulated in Eq.~\eqref{eq:improved_cost_with_reference}.
During this optimization, the overlap terms $(\langle u_{n}|u_{k}\rangle)^{n-1}_{k=1}$ and $(\langle u_{n}|u_{l}\rangle)^{n-1}_{l=1}$ must be dynamically evaluated.
As discussed in Sec.~\ref{subsec:hybrid_implementation}, to ensure the numerical stability of the projection framework, these overlaps are calculated via classical TN contractions, imposing a classical computational overhead of $\mathcal{O}(n)$ per iteration.
In contrast, the quantum computational task during the optimization process requires the dynamic evaluation of the cross term $\langle\Phi|R\rangle\langle u_{n},v_{n}|\Phi\rangle$, which involves an $\mathcal{O}(1)$ number of quantum evaluations per cost function call (typically two circuit settings with $\varphi=0$ and $\varphi=\pi/2$ for the real and imaginary parts, respectively).
Thus, the quantum measurement overhead in each optimization step remains lower than that of conventional VQE, where measurements typically scale with the number of Hamiltonian terms.

Once the optimization satisfies the convergence criterion (e.g., $|\delta I_{n}(\hat{U}_{n},\hat{V}_{n})|\le\epsilon_\mathrm{opt}$), the algorithm proceeds to the interval process.
At the $n$-th deflation step, this process updates the pseudo-inverses $A^{+}_{n}$ and $B^{+}_{n}$ for the orthogonality correction, and evaluates the cross terms associated with the newly obtained quantum circuits $\hat{U}_{n}$ and $\hat{V}_{n}$: $(\langle\Phi|R\rangle\langle u_{k},v_{l}|\Phi\rangle)^{n}_{k,l=1}$.
Specifically, the quantum evaluations in this step are limited to cross terms involving the most recently extracted circuits---namely, $(\langle\Phi|R\rangle\langle u_{k},v_{n}|\Phi\rangle)^{n-1}_{k=1}$ and $(\langle\Phi|R\rangle\langle u_{n},v_{l}|\Phi\rangle)^{n-1}_{l=1}$---so that the quantum cost is bounded by $\mathcal{O}(n)$.
Meanwhile, the classical computation is dominated by the pseudo-inverses of the overlap matrices, which incur a diagonalization cost of $\mathcal{O}(n^3)$.
This explicit orthogonality correction is a defining feature of the improved deflation method.
For comparison, a similar hybrid implementation can be constructed for the simple deflation algorithm by omitting this interval process.
However, as discussed in Sec.~\ref{subsec:partial_svd_demo1}, simple deflation requires a sufficiently deep and expressive variational ansatz to remain reliable.
As the number of extracted singular vectors $n$ increases, it induces severe ill-conditioning in the overlap matrices, leading to numerical instabilities that hinder the meaningful construction of quantum circuits.
While the pseudo-inverses in the improved deflation introduce an $\mathcal{O}(n^3)$ scaling bottleneck for sufficiently large $n$, this overhead appears only in the interval process and does not require updates at every optimization step; thus, its practical overhead is substantially reduced.
Executing this classical post-processing enhances numerical stability and accuracy while maintaining a relatively manageable computational cost, thereby providing a scalable implementation.

Finally, when the number of deflation steps reaches the iteration limit $n_\mathrm{iter}$, the workflow concludes by constructing the modified core matrix $e^{i\mathrm{arg}\langle\Phi|R\rangle}\Sigma^{(n)}$ (where $\Sigma^{(n)}$ is defined in Eq.~\eqref{eq:corrected_core_matrix}) using the accumulated cross terms.
Then, the final singular values can be extracted from the SVD of this matrix, which also provides the coefficients required to construct the final singular vectors.

Crucially, because the classical TN processing via pseudo-inverses strictly enforces mutual orthogonality, it constructs a well-defined projection operator even from suboptimally extracted states, effectively filtering out the non-orthogonality errors within the subspace.
Consequently, the residual errors propagating to the final singular values are essentially isolated to the hardware noise affecting the quantum evaluation of the cross terms.
This error isolation highlights the complementary structure of our hybrid framework.
High-precision orthogonality corrections, which are inherently challenging for near-term quantum devices, are entirely offloaded to the classical TN processing, while the quantum processor is dedicated to the classically intractable evaluation of cross terms involving complex target states, such as those generated by long-time Trotterized evolutions.

Furthermore, this strategic task allocation yields architectural synergies beyond the aforementioned algorithmic error resilience.
Ensuring mutual orthogonality through classical post-processing enhances the effective expressivity of the extracted-state superposition, allowing the individual quantum circuits in the improved deflation method to remain shallow.
Maintaining these shallow circuits not only mitigates the hardware noise during the quantum cross-term evaluations but also reduces the computational cost of the classical TN contractions.
From an architectural perspective, executing these lightweight quantum and classical tasks concurrently helps minimize the idle time of classical processors during quantum measurements---a persistent bottleneck in conventional variational quantum algorithm (VQA) frameworks.
By clarifying this quantum-classical division of labor, the architecture optimizes overall computational throughput and provides a practical and scalable foundation for tightly integrated high-performance computing and quantum-processing environments.

\section{Conclusion}
\label{sec:conclusion}

In this study, we developed a hybrid quantum-classical variational framework for extracting the entanglement spectrum of quantum many-body systems.
To provide a clear theoretical foundation, we established a correspondence between the canonical form of matrix product states (MPS) and bipartite quantum circuits, yielding a general and intuitive formulation for our variational ansatz.
Based on this architecture, our primary objective was to accurately evaluate the dominant and subdominant Schmidt components of target states prepared by quantum circuits, such as those generated via Trotterized time evolution, which are increasingly challenging to access using classical tensor network (TN) simulations.
As a proof of principle, we benchmarked our framework on the ground states of 1D and 2D Heisenberg models, demonstrating that our improved deflation algorithm accurately captures these essential entanglement properties.
Crucially, by approximating individual singular vectors with multiple quantum circuits and explicitly correcting their non-orthogonality via classical post-processing, this approach achieves high precision and enhanced numerical stability, overcoming the inherent limitations of other partial SVD formulations lacking explicit orthogonality correction.

The significance of these results lies in introducing a novel framework for near-term quantum algorithms: decoupling the burden of accuracy from the quantum circuit optimization process.
Conventional variational frameworks (e.g., targeting entire singular vectors in our full optimization) often face a severe scalability bottleneck, as achieving high numerical accuracy requires deep, highly expressive circuits that inevitably suffer from barren plateaus and hardware noise.
The improved deflation method addresses this bottleneck by utilizing the explicit orthogonality correction as an error-filtering mechanism.
By dynamically filtering out the non-orthogonality errors caused by the limited expressivity of shallow circuits, our approach tolerates suboptimal local minima and hardware noise, and relinquishes the strict requirement for deep or perfectly optimized quantum circuits.

Furthermore, this tolerance for shallow circuits unlocks deep architectural synergies that directly mitigate the fundamental bottlenecks of hardware implementation.
By incorporating an auxiliary shallow reference state, which provides a sufficient approximation to the target state, our framework bypasses the prohibitive overhead of implementing a controlled operation of the complex target state, thereby preserving linear signal sensitivity while suppressing noise.
Simultaneously, the inherent shallowness of the optimized circuits ensures that evaluating the required overlap matrices via classical TN contractions remains computationally efficient and highly accurate.
This strategic task allocation enables a concurrent execution model: the quantum processor focuses exclusively on the classically intractable evaluation of cross terms involving the target state (e.g., $\langle u_{n},v_{m}|\Phi\rangle$), while the classical processor handles the orthogonality corrections and TN contractions.
By minimizing hardware idle time and maximizing overall computational throughput, this synergistic architecture provides a complementary route to modern randomized techniques such as Classical Shadow Tomography (CST), specifically bypassing the exponential sampling and computational overheads typically required when reconstructing the entanglement spectrum of large subsystems~\cite{huang_Predicting_2020}.

Consequently, by systematically mitigating the fundamental hurdles of circuit depth, optimization hardness, and measurement complexity, our framework provides a robust and practical roadmap for large-scale entanglement spectrum estimation.
It offers a relevant pathway for advanced NISQ and post-NISQ devices equipped with error mitigation, while establishing a solid foundation for early fault-tolerant quantum computing applications.
A promising direction for future research involves integrating this approach with sample-efficient measurement protocols, such as CST, and quantum error mitigation techniques, alongside the development of noise-resilient quantum circuit optimization.
Because our framework structurally isolates the quantum computational burden strictly to the evaluation of cross terms, these future optimizations can further reduce the overall implementation cost without altering the core algorithm.
We anticipate that this hybrid framework will provide a useful route for quantitatively characterizing exotic phenomena---such as quantum criticality, topological order, and non-equilibrium dynamics---in complex quantum many-body systems.


\begin{acknowledgments}
This work was supported by Grant-in-Aid for Scientific Research (A) (No.~JP21H04446), Grant-in-Aid for Scientific Research (B) (No.~JP21H03455, No.~JP22H01171, and No~JP24K02948), Grant-in-Aid for Scientific Research (C) (No.~JP22K03479), and Grant-in-Aid for Early-Career Scientists (No.~JP24K16978) from MEXT, Japan, and Grant-in-Aid for Transformative Research Areas "The Natural Laws of Extreme Universe---A New Paradigm for Spacetime and Matter from Quantum Information" (No.~JP21H05182 and No.~JP21H05191) from JSPS, Japan.
It was also supported by JST PRESTO (No.~JPMJPR1911), MEXT Q-LEAP (No.~JPMXS0120319794), and JST COI-NEXT (No.~JPMJPF2014 and No.~JPMJPF2221). 
It was also partially supported by Program for Promoting Research on the Supercomputer Fugaku (No.~JPMXP1020230411) form MEXT, Japan, by the New Energy and Industrial Technology Development Organization (NEDO) (No. JPNP20017), and by the COE research grant in computational science from Hyogo Prefecture and Kobe City through Foundation for Computational Science.
Additionally, we acknowledge the support from the UTokyo Quantum Initiative and the RIKEN TRIP initiative (RIKEN Quantum).
We are grateful for allocating computational resources of the HOKUSAI BigWaterfall supercomputing system at RIKEN and SQUID at the D3 Center, Osaka University.
\end{acknowledgments}



\appendix

\section{Quantum Circuit Representation of the Canonical MPS}
\label{app:quantum_circuit_representation_for_mps}


\begin{figure*}[htbp]
\centering
\includegraphics[width=1.7\columnwidth]{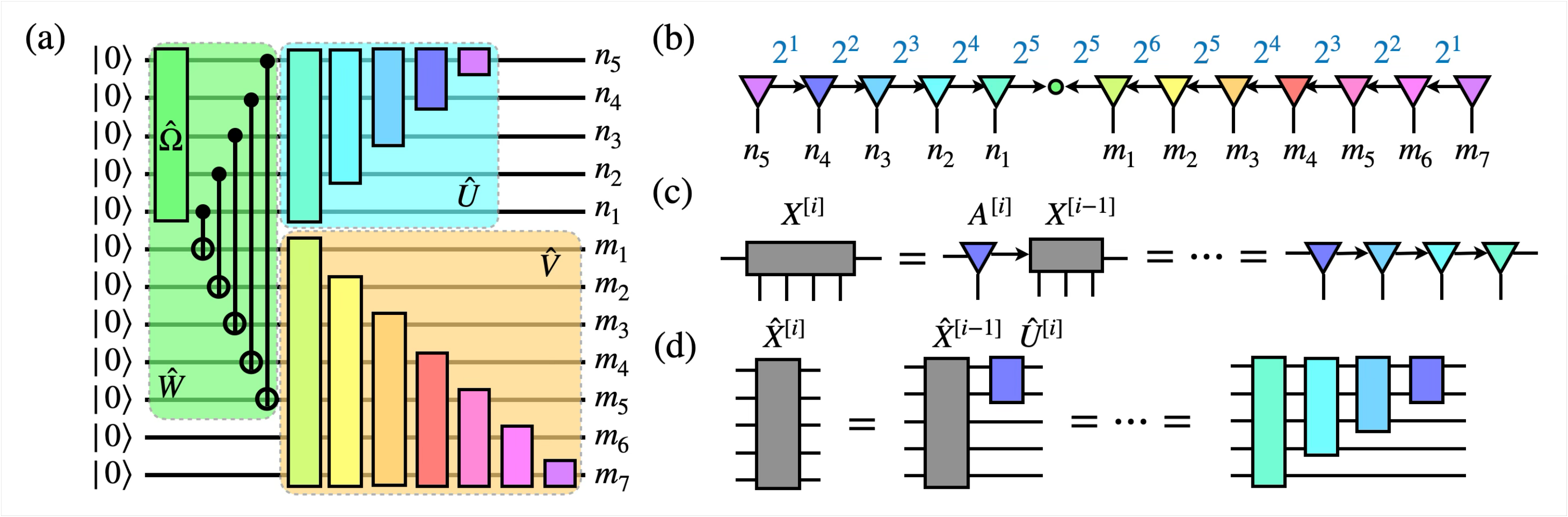}
\caption{
(a) Exact quantum circuit representation of an arbitrary quantum state for a system size $(N,M)=(5,7)$, which corresponds to the canonical form of the MPS with an orthogonalized center shown in (b).
The subscripts $n_{i}$, $m_{j}$ denote the physical indices of each qubit, and the blue numbers on the bond between MPS tensors indicate the bond dimensions.
(c) The iterative process for decomposing large multi-qubit tensors $X^{[i]}$, $Y^{[j]}$ into the MPS tensors, and their quantum circuit representation depicted in (d).
}
\label{fig:svd-gate-decomposition}
\end{figure*}


\subsection{Exact Quantum Circuit Representation}
\label{app:exact_quantum_circuit_representation}

In this section, we demonstrate that a Matrix Product State (MPS) in the canonical form can be exactly mapped to a quantum circuit representation.
To establish this property, we first show that an arbitrary quantum state $|\Psi\rangle$ can be directly transformed into an MPS through Singular Value Decomposition (SVD).
Furthermore, we clarify that this MPS, constructed in the canonical form with an orthogonalized center, directly corresponds to the quantum circuit representation.

First, let us consider an arbitrary quantum state $|\Psi\rangle$ partitioned into two subsystems $A$ and $B$:
\begin{align}
|\Psi\rangle = \sum^{2^{N}-1}_{n=0}
\sum^{2^{M}-1}_{m=0}C_{n,m}|n,m\rangle,
\end{align}
where $n$ and $m$ correspond to the bit-strings $n=n_{N}\cdots n_{1}$ and $m=m_{M}\cdots m_{1}$ respectively, and $|n,m\rangle=|n\rangle\otimes |m\rangle$.
We apply SVD to the coefficient matrix $C$ as $C=X\Sigma Y^{T}$, where $X$ and $Y$ are unitary matrices of size $2^{N}\times 2^{N}$ and $2^{M}\times 2^{M}$, respectively.
$\Sigma$ is a $2^{N}\times 2^{M}$ rectangular matrix with $k$-th diagonal element $\Sigma_{k,k}=\sigma_{k}$ and zero off-diagonal elements.
The state $|\Psi\rangle$ can then be expressed in the form of a Schmidt decomposition:
\begin{align}
&|\Psi\rangle=\sum^{\chi-1}_{k=0}\sigma_{k}\left(\sum^{2^{N}-1}_{n=0}|n\rangle X_{n,k}\right)\otimes\left(\sum^{2^{M}-1}_{m=0}|m\rangle Y_{m,k}\right).
\end{align}
Here, $\chi$ denotes the Schmidt rank of the matrix $C$, satisfying $\chi\le 2^{K}$, where $K=\min(N,M)$.

Next, we sequentially decompose the tensors $X$ and $Y$ into their MPS forms using SVDs. This process corresponds to separating the system one qubit at a time.
To demonstrate this for the tensor $X$, we begin by defining an initial tensor $X^{[N]}_{x_{N}n,k}=X_{n,k}\delta_{x_{N},0}$ with an auxiliary index $x_{N}$ fixed to 0 as:
\begin{align}
X^{[N]}_{x_{N}n_{N}\cdots n_{1},k}=X_{n,k}\delta_{x_{N},0}.
\end{align}
The MPS tensors are generated iteratively, from $i=N$ down to 2.
At each step, we first reshape the tensor $X^{[i]}$ into a matrix by partitioning its indices into a row index $(x_{i},n_{i})$  and a column index $(n_{i-1},\cdots,n_{1},k)$.
We then apply SVD to this matrix, which yields the unitary matrices $U^{[i]}$ and $\mathcal{X}^{[i]}$, and singular values $p^{[i]}_{x_{i-1}}$ as follows:
\begin{align}
&X^{[i]}_{x_{i}n_{i}\cdots n_{1},k}=\sum_{x_{i-1}}p^{[i]}_{x_{i-1}}U^{[i]}_{x_{i}n_{i},x_{i-1}}\mathcal{X}^{[i]}_{kn_{i-1}\cdots n_{1},x_{i-1}}.
\end{align}
Subsequently, we identify the left unitary matrix $U^{[i]}$ as the $i$-th MPS tensor and absorb the remaining singular values $p^{[i]}$ and the right unitary matrix $\mathcal{X}^{[i]}$ into the subsequent tensor $X^{[i-1]}$.
Through these steps, the relations between the tensors $X^{[i]}$ can be summarized as the following recurrence formula:
\begin{align}
X^{[i]}_{x_i n_i \cdots n_1, k} = \sum_{x_{i-1}} U^{[i]}_{x_i n_i, x_{i-1}} X^{[i-1]}_{x_{i-1}, n_{i-1}\cdots n_1, k}
\end{align}
Repeating the above operations down to $i=2$, we can decompose the original tensor $X$ into the product of unitary matrices $U^{[N]},\cdots,U^{[2]}$ and a remaining tensor $X^{[1]}$.
Similarly, the tensor $Y$ can be decomposed into a product of unitary matrices $V^{[M]},\cdots,V^{[1]}$ and a remaining tensor $Y^{[1]}$.
Then, the dimensions of the bond indices $\chi^{X}_{i}$ and $\chi^{Y}_{i}$ satisfy the following inequalities:
\begin{align}
&\chi^{X}_{N}=1,~
\chi^{X}_{i-1}\le\min(2\chi^{X}_{i},2^{i-1}\chi),\\
&\chi^{Y}_{M}=1,~
\chi^{Y}_{i-1}\le\min(2\chi^{Y}_{i},2^{i-1}\chi).
\end{align}
After these sequential decompositions, we apply the SVD at the center of the two subsystems as
\begin{align}
\sum^{\chi-1}_{k=0}\sigma_{k}X^{[1]}_{k,n_{1},x_{1}}
Y^{[1]}_{k,m_{1},y_{1}}=&
\sum^{\chi_{0}-1}_{k=0}w_{k}U^{[1]}_{k,x_{1}n_{1}}
V^{[1]}_{k,y_{1}m_{1}}.
\end{align}
This process yields the new unitary matrices $U^{[1]}$, $V^{[1]}$ and singular values $w_{k}$.
If we do not truncate the bond dimension throughout the process, the singular values $w_{k}$ and the rank $\chi_{0}$ are identical to $\sigma_{k}$ and $\chi$, due to the unitarity of the matrices $X^{[1]}$ and $Y^{[1]}$.
Using the above results, an arbitrary quantum state $|\Psi\rangle$ can be exactly represented as
\begin{align}
|\Psi\rangle=&
\sum^{\chi_{0}-1}_{k=0}
\sum^{2^{N}-1}_{n=0}
\sum^{2^{M}-1}_{m=0}
w_{k}|n,m\rangle
\nonumber\\&\times
\left(\sum_{\{x_{i}\}}U^{[N]}_{x_{N}n_{N},x_{N-1}}
\cdots U^{[1]}_{x_{1}n_{1},k}\right)
\nonumber\\&\times
\left(\sum_{\{y_{i}\}}V^{[M]}_{y_{M}m_{M},y_{M-1}}
\cdots V^{[1]}_{y_{1}m_{1},k}\right).
\end{align}

Next, to rewrite the state $|\Psi\rangle$ into the MPS form, we define the tensors as follows:
\begin{align}
&W_{x',x} = w_{x}\delta_{x',x},
\label{eq:mps-qc-transform-center}
\\
&\Bigl(A^{[n_{i}]}_{i}\Bigr)_{x_{i},x_{i-1}}
=U^{[i]}_{x_{i}n_{i},x_{i-1}},
\label{eq:mps-qc-transform-left}
\\
&\Bigl(B^{[m_{i}]}_{i}\Bigr)_{y_{i-1},y_{i}}
=V^{[i]}_{y_{i}m_{i},y_{i-1}},
\label{eq:mps-qc-transform-right}
\end{align}
where $A^{[n_{i}]}_{i}$ and $B^{[m_{i}]}_{i}$ are the tensors defined for each qubit $|n_{i}\rangle$ and $|m_{i}\rangle$, and $W$ is the diagonal matrix with the singular values $w_{i}$.
Hence, we can rewrite the quantum state $|\Psi\rangle$ into the compact MPS form using the matrix products and taking the $(1,1)$ element:
\begin{align}
|\Psi\rangle=&\sum^{2^{N}-1}_{n=0}\sum^{2^{M}-1}_{m=0}
|n,m\rangle\left(A^{[n_{N}]}_{N}\cdots A^{[n_{1}]}_{1}WB^{[m_{1}]}_{1}\cdots B^{[m_{M}]}_{M}\right)_{1,1}.
\end{align}
It is worth noting that the tensors $A$ and $B$ satisfy the following left and right canonical conditions due to the unitarity of the matrices $U$ and $V$. 
\begin{align}
\sum_{n_{i}}\sum_{x_{i}}
\Bigl(A^{[n_{i}]}_{i}\Bigr)_{x_{i},x_{i-1}}
\Bigl(A^{[n_{i}]}_{i}\Bigr)^{*}_{x_{i},x'_{i-1}}
=\delta_{x_{i-1},x'_{i-1}},
\\
\sum_{m_{i}}\sum_{y_{i}}
\Bigl(B^{[m_{i}]}_{i}\Bigr)_{y_{i-1},y_{i}}
\Bigl(B^{[m_{i}]}_{i}\Bigr)^{*}_{y'_{i-1},y_{i}}
=\delta_{y_{i-1},y'_{i-1}}.
\end{align}
Thus, we can exactly represent an arbitrary quantum state $|\Psi\rangle$ in the canonical form of MPS with the orthogonalized center.

Finally, we show that the above state can be rewritten as the quantum circuit representation.
For this purpose, we introduce the unitary gates $\hat{U}^{[l]}$ and $\hat{V}^{[l]}$ defined by:
\begin{align}
&\hat{U}^{[l]}=\sum_{n_{l}=0,1}|x_{l}n_{l}\rangle
\langle x_{l-1}|U^{[l]}_{x_{l-1},x_{l}n_{l}},\\
&\hat{V}^{[l]}=\sum_{m_{l}=0,1}|y_{l}m_{l}\rangle
\langle y_{l-1}|V^{[l]}_{y_{l-1},y_{l}m_{l}}.
\end{align}
Furthermore, we define the unitary gate $\hat{W}$ such that it prepares the entangled state from the initial state:
\begin{align}
\hat{W}|0,0\rangle=&
\sum^{2^{K}-1}_{k=0}w_{k}|k,k\rangle
\end{align}
By combining the above three types of unitary gates, the quantum state $|\Psi\rangle$ can be represented as
\begin{align}
|\Psi\rangle =&
\left(\hat{U}^{[N]}\cdots\hat{U}^{[1]}\otimes
\hat{V}^{[M]}\cdots\hat{V}^{[1]}\right)\hat{W}|0\rangle.
\end{align}
The gate sizes of the unitary gates $U^{[i]}$ and $V^{[i]}$, denoted as $d^{U}_{i}$ and $d^{V}_{i}$, satisfy the following relation:
\begin{align}
d^{a}_{i}=&\left\lceil\log_{2}\max(\chi^{a}_{i-1},2\chi^{a}_{i})\right\rceil,
\end{align}
where $a=U,V$ and $\chi^{U}_{0}=\chi^{V}_{0}=\chi_{0}$.
Thus, we find that an arbitrary quantum state can be exactly represented as an MPS with an orthogonalized center without any constraint on the bond dimension, and simultaneously, such an MPS can be transformed into the quantum circuit representation.
These results suggest that the quantum state represented as the MPS with the constraint of the bond dimension can always be represented as the quantum circuit via Eqs.~\eqref{eq:mps-qc-transform-center},\eqref{eq:mps-qc-transform-left},\eqref{eq:mps-qc-transform-right}.
We note that the number of qubits shared by two adjacent unitary gates corresponds to the bond dimension (See Figs.\ref{fig:svd-gate-decomposition}(a) and (b)).

\subsection{Quantum Circuit Representation for the Orthogonalized Center}
\label{app:quantum_circuit_representation_for_orthogonal_center}


\begin{figure*}[htbp]
\centering
\includegraphics[width=1.4\columnwidth]{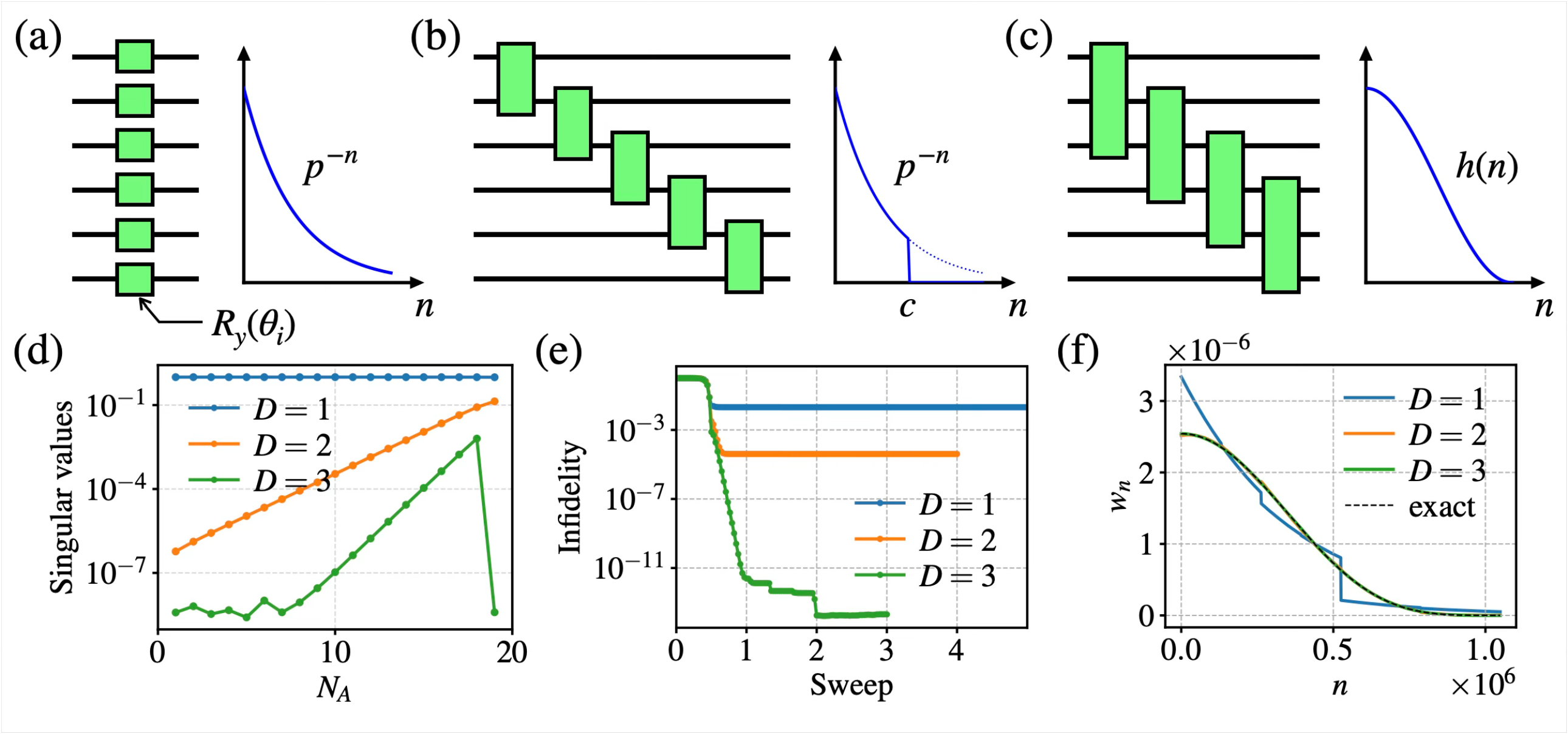}
\caption{
Quantum circuit representation of the orthogonalized center $\hat{\Omega}$ 
for singular values following typical distributions: (a) exponential decay $p^{n}$, (b) exponential decay $p^{n}$ with a cutoff ($w_{n}=0$ for $n\ge c$), and (c) the Hann function.
In panels (a)--(c), the left figures show the quantum circuit structures that can exactly represent the corresponding target distribution profiles displayed in the right figures.
(d) Singular values of the matrix $\tilde{C}$ for the Hann function ($N=20$) plotted as a function of the subsystem boundary $N_{A}$.
(e)(f) Numerical results for the Hann function ($N=20$) using approximated quantum circuits with gate sizes $M=1,2,3$ corresponding to the structures shown in the left panels of (a)--(c): (e) infidelity as a function of the number of sweep updates, and (f) the distribution $w_{n}$ generated by the approximated circuits.
}
\label{fig:qc-svd-examples}
\end{figure*}


In the previous section, we discussed the construction of the quantum circuit for an arbitrary quantum state using SVD, but did not detail the specific quantum circuit representation for each unitary gate.
In this section, we focus on the unitary gates corresponding to the orthogonalized center and construct their detailed quantum circuits for three cases: (1) where the singular values decay exponentially, (2) where they are truncated with a cutoff, and (3) where they are composed of a summation of exponential functions.

First, it is worth noting that the unitary gate $\hat{W}$ corresponding to the orthogonalized center can be decomposed into a $K$-qubit unitary gate $\hat{\Omega}$ with $K=\mathrm{min}(N,M)$ and a product of CNOT gates.
Let $\mathrm{CX}_{i,j}$ denote a CNOT gate with control qubit $i$ and target qubit $j$.
Then, $\hat{W}$ is expressed as:
\begin{align}
&\hat{W}=\prod^{K}_{n=1}\mathrm{CX}_{n,n+N}
(\hat{\Omega}\otimes\hat{I}),\\
&\hat{\Omega}|0\rangle=\sum_{n}w_{n}|n\rangle,
\end{align}
where $\hat{I}$ acts on the remaining qubits, and the coefficients $w_{n}$ satisfy the normalization condition $\sum_{n}|w_{n}|^2=1.$
This decomposition is easily verified from the property of the CNOT gates:
\begin{align}
\prod^{K}_{n=1}\mathrm{CX}_{n,n+N}|n,0\rangle=|n,n\rangle.
\end{align}
Thus, the problem of constructing the quantum circuit for the orthogonalized center $\hat{W}$ reduces to constructing the unitary gate that prepares a quantum state with the weights $w_{n}$ following the singular value distributions.
In the following subsections, we concretely construct the quantum circuits corresponding to several examples of the singular value distributions.

\subsubsection{Exponentially Decaying Singular Values}
\label{app:orthogonal_center_for_exp_function}

We first consider the case where the singular values $w_{n}$ decay exponentially as $w_{n}=pw_{n-1}$ with $0\le p\le 1$.
The distribution of these singular values is plotted in the right panel of Fig.~\ref{fig:qc-svd-examples}(a).
Since the weight factorizes according to the binary representation $n=\sum^{N}_{i=1}2^{i-1}n_{i}$ as $p^{n}=\prod^{N}_{i=1}(p^{2^{i-1}})^{n_{i}}$, the state $\hat{\Omega}|0\rangle$ decomposes into a tensor product of single-qubit states.
Consequently, $\hat{\Omega}$ is implemented as a product of single-qubit rotation gates $R_{y}$ acting on the $i$-th qubit:
\begin{align}
\hat{\Omega}|0\rangle=&
\sqrt{\frac{1}{\mathcal{N}}}
\bigotimes^{N}_{i=1}\sum_{n_{i}\in\{0,1\}}
p^{2^{i-1}n_{i}}|n_{i}\rangle
\nonumber\\=&
\bigotimes^{N}_{i=1}R_{y}(\theta_{i})|0\rangle,
\end{align}
where $\mathcal{N}$ is the normalization factor.
Here, the action of the $R_{y}$ gate is defined by
\begin{align}
R_{y}(\theta_{i})|0\rangle=
\cos{\frac{\theta_{i}}{2}}
\sum_{n_{i}\in\{0,1\}}\left(\tan{\frac{\theta_{i}}{2}}
\right)^{n_{i}}|n_{i}\rangle.
\end{align}
Comparing the coefficients, we find that the normalization factor $\mathcal{N}$, decay ratio $p$, and angle $\theta_{i}$ satisfy the following relations:
\begin{align}
&\mathcal{N}=\prod^{N}_{i=1}
\left(\cos{\frac{\theta_{i}}{2}}\right)^{-2},\\
&p^{2^{i-1}}=\tan{\frac{\theta_{i}}{2}}.
\end{align}
This result demonstrates that for exponentially decaying singular values, the unitary gate $\hat{\Omega}$ can be efficiently represented as a tensor product of local $R_{y}$ gates.

\subsubsection{Exponentially Decaying Singular Values with a Cutoff}
\label{app:orthogonal_center_for_exp_function_with_cutoff}

Next, we consider the construction of the quantum circuit $\hat{\Omega}$ corresponding to the exponentially decaying singular values $w_{n}$ with a cutoff $c$ such that $w_{n}=0$ for $n\ge c$.
The profile of this distribution is illustrated in the right panel of Fig.~\ref{fig:qc-svd-examples}(b).
As demonstrated in this section, the quantum circuit $\hat{\Omega}$ can be composed of a product of unitary gates of gate size 2.
To prove this, it is sufficient to show that the quantum circuit state $\hat{\Omega}|0\rangle$ can be written as an MPS with a bond dimension up to 2.
We first introduce a bipartition $(A,B)$ defined by the cut at an arbitrary position $N_{A}$, where $A=\{1,\cdots,N_{A}\}$ and $B=\{N_{A}+1,\cdots,N_{A}+N_{B}\}$ and assume that the integer $c$ is represented as $c=2^{N_{A}}b+a$.
By decomposing the index $n$ as $n=l+2^{N_{A}}m$, where $l$ ($0\le l\le 2^{N_{A}}$) and $m$ correspond to the lower and higher bits, the state $\hat{\Omega}|0\rangle$ can be written as
\begin{align}
\hat{\Omega}|0\rangle=&\sum^{c-1}_{n=0}w_{n}|n\rangle
=\sqrt{\frac{1}{\mathcal{N}}}\sum^{c-1}_{n=0}p^{n}|n\rangle
\nonumber\\=&
\sqrt{\frac{1}{\mathcal{N}}}\sum^{2^{N_{A}}-1}_{l=0}p^{l}|l\rangle
\otimes\sum^{b-1}_{m=0}p^{2^{N_{A}}m}|m\rangle
\nonumber\\&
+\sqrt{\frac{1}{\mathcal{N}}}\sum^{a-1}_{l=0}p^{l}|l\rangle
\otimes p^{2^{N_{A}}b}|b\rangle.
\end{align}
This result suggests that the maximum bond dimension is fixed to 2 since the state $\hat{\Omega}|0\rangle$ has the mutually orthonormalized bases for subsystems $A$ and $B$.
To confirm this fact, we first choose the orthonormalized bases $\{\varphi^{A}_{n}\}$ and $\{\varphi^{B}_{n}\}$ for the subsystems $A$ and $B$ as follows:
\begin{align}
|\varphi^{A}_{1}\rangle=&\sqrt{\frac{1}{\mathcal{N}^{A}_{1}}}
\sum^{a-1}_{n=0}p^{n}|n\rangle,\\
|\varphi^{A}_{2}\rangle=&\sqrt{\frac{1}{\mathcal{N}^{A}_{2}}}
\sum^{2^{N_{A}}-1}_{n=a}p^{n}|n\rangle,\\
|\varphi^{B}_{1}\rangle=&\sqrt{\frac{1}{\mathcal{N}^{B}_{1}}}
p^{2^{N_{A}}b}|b\rangle,\\
|\varphi^{B}_{2}\rangle=&\sqrt{\frac{1}{\mathcal{N}^{B}_{2}}}
\sum^{b-1}_{n=0}p^{2^{N_{A}}n}|n\rangle,
\end{align}
where $\mathcal{N}^{A}_{i}$ and $\mathcal{N}^{B}_{i}$ are the normalization factors for the bases $\{\varphi^{A}_{n}\}$ and $\{\varphi^{B}_{n}\}$, respectively.
Thus, the state $\hat{\Omega}|0\rangle$ can be written as
\begin{align}
\hat{\Omega}|0\rangle=&
\sum_{n,m}\tilde{C}_{n,m}|\varphi^{A}_{n}\rangle
\otimes|\varphi^{B}_{m}\rangle
\end{align}
where the matrix $\tilde{C}$ is written as
\begin{align}
\tilde{C}=&
\begin{pmatrix}
\sqrt{\mathcal{N}^{A}_{1}\mathcal{N}^{B}_{1}/\mathcal{N}}&
\sqrt{\mathcal{N}^{A}_{1}\mathcal{N}^{B}_{2}/\mathcal{N}}\\
0&\sqrt{\mathcal{N}^{A}_{2}\mathcal{N}^{B}_{2}/\mathcal{N}}
\end{pmatrix}
\nonumber\\\simeq&
\begin{pmatrix}
p^{2^{N_{A}}b}\sqrt{1-p^{2a}}&\sqrt{1-p^{2a}}\\0&p^{a}
\end{pmatrix}
\end{align}
where we assume that the integers $N$ and $N_{A}$ are sufficiently large in the rightmost equation.
This result suggests that the rank of $\tilde{C}$ is at most 2. Thus, the quantum circuit $\hat{\Omega}$ can be written as the product of the sequentially aligned two-qubit gates shown in Fig.~\ref{fig:qc-svd-examples}.
On the other hand, if the matrix $\tilde{C}$ is exactly singular, each of normalization factors $\mathcal{N}^{A}_{i}$ and $\mathcal{N}^{B}_{i}$ becomes zero.
This corresponds to the case where $a=0$ or $b=0$.
In fact, if $c=2^{n}$ for $n\in\mathbb{Z}$, the rank of $\tilde{C}$ becomes exactly 1.
Furthermore, since the singular values $\sigma_{n}$ of the matrix $\tilde{C}$ can be approximated as
\begin{align}
\sigma_{1}\simeq&
1+\frac{1}{2}p^{2(c-a)}(1-p^{2a})^2,\\
\sigma_{2}\simeq&
p^{c}\sqrt{1-p^{2a}}
\end{align}
for $N_{A}\gg 1$, the singular values exponentially decay to zero for large cutoff $c$ and the rank of $\tilde{C}$ can be approximated to 1.

\subsubsection{Singular Values Corresponding to a Hann Window Function}
\label{app:orthogonal_center_for_hann_function}

Finally, we discuss the case where the singular values can be written as a linear combination of exponential functions.
In this section, we introduce the Hann function $h(n)$, which is known as a window function in the context of signal processing, and consider its quantum circuit representation.
The Hann function can be expressed as
\begin{align}
\sqrt{\mathcal{N}}w_{n}=&h(n)
=\frac{1}{2}\left[1+\cos{\left(\frac{\pi n}{2^{N}}\right)}\right]
\nonumber\\=&
\frac{1}{2}+\frac{1}{4}\sum_{\sigma=\pm 1}e^{i\sigma\alpha n},
\end{align}
where $\mathcal{N}$ is the normalization factor and $\alpha=\frac{\pi}{2^{N}}$.
The profile of this distribution is illustrated in the right panel of Fig.~\ref{fig:qc-svd-examples}(c).
In the following, we assume that the integers $N_{A}$, $N_{B}$, and $N$ are sufficiently large.
We note that the normalization factor $\mathcal{N}$ evaluates to $\mathcal{N}=2^{N}\beta$ with $\beta\simeq\frac{3}{8}$ using the Riemann sum formula.
Decomposing the global index as $l=n+2^{N_{A}}m$ separates the phase factor into contributions from lower bits $n$ ($0\le n< 2^{N_{A}}$) and upper bits $m$ ($0\le m< 2^{N_{B}}$).
Consequently, the state $\hat{\Omega}|0\rangle$ with coefficients corresponding to the Hann function can be represented as:
\begin{align}
\sqrt{\mathcal{N}}\hat{\Omega}|0\rangle=&
\frac{1}{2}\sum^{2^{N_{A}}-1}_{n=0}|n\rangle
\otimes\sum^{2^{N_{B}}-1}_{m=0}|m\rangle
\nonumber\\&
+\frac{1}{4}\sum_{\sigma=\pm 1}\left(\sum^{2^{N_{A}}-1}_{n=0}e^{i\sigma\alpha n}|n\rangle\right)
\nonumber\\&
\otimes\left(\sum^{2^{N_{B}}-1}_{m=0}e^{i\sigma\alpha 2^{N_{A}}m}|m\rangle\right).
\end{align}
Next, we choose the normalized bases $|a_{\mu}\rangle$ and $|b_{\mu}\rangle$~($\mu=-1,0,1$) for the subsystems $A$ and $B$ as
\begin{align}
|a_{\mu}\rangle=&\sqrt{\frac{1}{2^{N_{A}}}}
\sum^{2^{N_{A}}-1}_{n=0}e^{i\mu\alpha n}|n\rangle,\\
|b_{\mu}\rangle=&\sqrt{\frac{1}{2^{N_{B}}}}
\sum^{2^{N_{B}}-1}_{n=0}e^{i\mu\alpha 2^{N_{A}}n}|n\rangle.
\end{align}
To orthogonalize the above bases, we define the following two overlap matrices:
\begin{align}
[S_{A}]_{\mu,\mu'}=&\langle a_{\mu}|a_{\mu'}\rangle
=\frac{1}{2^{N_{A}}}\sum^{2^{N_{A}}-1}_{n=0}
e^{-i(\mu-\mu')\alpha n},\\
[S_{B}]_{\mu,\mu'}=&\langle b_{\mu}|b_{\mu'}\rangle
=\frac{1}{2^{N_{B}}}\sum^{2^{N_{B}}-1}_{n=0}
e^{-i(\mu-\mu')\alpha 2^{N_{A}}n}
\end{align}
where the matrices $S_{A}$ and $S_{B}$ are positive semi-definite and Hermitian.
In particular, when $N_{A}$ and $N_{B}$ are sufficiently large, we approximate the matrices $S_{A}$ and $S_{B}$ by expanding the exponential terms and retaining the relevant leading-order terms:
\begin{align}
S_{A}\simeq&\begin{pmatrix}1&1&1\\1&1&1\\1&1&1\end{pmatrix}
+i\frac{\pi}{2}2^{-N_{B}}
\begin{pmatrix}0&1&2\\-1&0&1\\-2&-1&0\end{pmatrix}
\nonumber\\&
-\frac{\pi^2}{6}2^{-2N_{B}}
\begin{pmatrix}0&1&4\\1&0&1\\4&1&0\end{pmatrix}
,\\
S_{B}\simeq&I_{3}+i\frac{2}{\pi}
\begin{pmatrix}0&1&0\\-1&0&1\\0&-1&0\end{pmatrix}
\end{align}
This result suggests that the matrix $S_{A}$ becomes rank-deficient for $N_{A},N_{B}\gg 1$.
Indeed, perturbative analysis reveals that the leading non-vanishing contribution to the smallest eigenvalue of $S_{A}$ is of the order of $\mathcal{O}(2^{-4N_{B}})$, decreasing exponentially with increasing $N_{B}$.
This implies that the state on the smaller subsystem $A$ can be mainly composed of at most two orthonormal basis vectors, and the neglected components do not affect the subsystem $A$ corresponding to the lower qubits.
Next, using the unitary matrices $U_{A}$ and $U_{B}$, and the diagonal matrices $D_{A}$ and $D_{B}$, we diagonalize the matrices $S_{A}$ and $S_{B}$ as follows:
\begin{align}
&S_{A} = U_{A}D_{A}U_{A}^{\dagger},\\
&S_{B} = U_{B}D_{B}U_{B}^{\dagger}.
\end{align}
Then, we can choose the orthonormalized bases for the subsystems $A$ and $B$ as 
\begin{align}
|\varphi^{A}_{n}\rangle=&\sqrt{\frac{1}{d^{A}_{n}}}
\sum_{\mu}|a_{\mu}\rangle[U_{A}]_{\mu,n},\\
|\varphi^{B}_{n}\rangle=&\sqrt{\frac{1}{d^{B}_{n}}}
\sum_{\mu}|b_{\mu}\rangle[U_{B}]_{\mu,n},
\end{align}
where we denote the $n$-th entry of the diagonal matrices $D_{A}$ and $D_{B}$ as $d^{A}_{n}$ and $d^{B}_{n}$, respectively.
Thus, the state $\hat{\Omega}|0\rangle$ can be represented as
\begin{align}
\hat{\Omega}|0\rangle=&
\sqrt{\frac{1}{\beta}}\sum_{\mu}C_{\mu}|a_{\mu}\rangle\otimes|b_{\mu}\rangle
\nonumber\\=&
\sum_{n,m}\tilde{C}_{n,m}|\varphi^{A}_{n}\rangle
\otimes|\varphi^{B}_{m}\rangle
\end{align}
where the coefficients are given by $C_{0}=1/2$ and $C_{\pm 1}=1/4$.
Using the coefficients $C_{\mu=0,\pm 1}$, the coefficients $\tilde{C}_{n,m}$ can be written as follows.
\begin{align}
\tilde{C}_{n,m}=\sqrt{\frac{d^{A}_{n}d^{B}_{m}}{\beta}}
\sum_{\mu}C_{\mu}[U_{A}]^{*}_{\mu,n}[U_{B}]^{*}_{\mu,m}
\end{align}
Performing the SVD on the matrix $\tilde{C}_{n,m}$ reveals that the rank of $\tilde{C}$ is at most 3, and it reduces to 2 for small $N_{A}$ as shown in Fig.~\ref{fig:qc-svd-examples}(d).
This reduction arises because, for small $N_{A}$, the phase variation $e^{i\alpha n}$ is insufficient to fully distinguish between the three basis states $|a_{\mu}\rangle$, resulting in their effective linear dependence.
This implies that the state $\hat{\Omega}|0\rangle$ can be implemented as a quantum circuit composed of sequentially aligned unitary gates with a gate size of $3$, as depicted in the left panel of Fig.~\ref{fig:qc-svd-examples}(c).
Indeed, numerical results in Figs.~\ref{fig:qc-svd-examples}(e) and (f) demonstrate that as the gate size $D$ approaches 3, the infidelity vanishes and the generated distribution $w_{n}$ converges to the exact one.
While this discussion focuses on the Hann function, the underlying principle extends to any function composed of a finite number of Fourier components.
As demonstrated, the required bond dimension of the MPS--corresponding to the gate size--is bounded by the number of Fourier components included in the target function.
Moreover, the proposed method for constructing circuits with arbitrary weights has applications beyond reproducing singular value distributions.
Crucially, this construction scheme is applicable to quantum algorithms for generating random numbers following arbitrary probability distributions.

\subsection{Application to SVD for General Matrix Operators}
\label{app:svd_for_general_matrix}

In our study, we have focused on the variational quantum algorithm for the Schmidt decomposition of a bipartite quantum state.
Since the Schmidt decomposition is mathematically equivalent to SVD, it is instructive to discuss the application of our quantum algorithm to SVD for general matrix operators.
In this section, to demonstrate the applicability of our method, we show that the SVD of a general matrix operator $\hat{A}$ can be mapped to the Schmidt decomposition of a corresponding bipartite quantum state via a vectorization process.

First, for simplicity, we assume that the general matrix operator $\hat{A}$ can be decomposed into a linear combination of unitary gates $\hat{G}_{i}$ as $\hat{A}=\sum_{i}\gamma_{i}\hat{G}_{i}$ with complex coefficients $\gamma_{i}$.
To transform this operator into a bipartite quantum state, we consider a system where each subsystem consists of $K$ qubits, assuming a square matrix dimension of $2^{K}$.
We introduce a quantum circuit $\hat{W}$ constructed from a product of $R_{y}$ gates (or Hadamard gates): $\bigotimes^{K}_{i=1}R_{y}(\pi/2)$, and a cascade of CNOT gates, which prepares a maximally entangled state $\hat{W}|0,0\rangle=2^{-K/2}\sum^{2^{K}-1}_{n=0}|n,n\rangle$.
By applying the unitary gates $\hat{G}_{i}$ to the first subsystem of this state, we obtain the following unnormalized relation:
\begin{align}
|A\rangle\equiv&
\sum_{i}\gamma_{i}
(\hat{G}_{i}\otimes\hat{I})\hat{W}|0,0\rangle
\nonumber\\=&
2^{-K/2}\sum_{n}\sum_{i}\gamma_{i}(\hat{G}_{i}\otimes\hat{I})|n,n\rangle
\nonumber\\=&
2^{-K/2}\sum_{m,n}A_{m,n}|m,n\rangle,
\end{align}
where we have used the definition $\hat{A}=\sum_{m,n}A_{m,n}|m\rangle\langle n|$.

This result demonstrates that the vectorized form $|A\rangle$ of the operator $\hat{A}$ can be explicitly constructed using the quantum circuit $\hat{W}$ and a linear combination of the unitary gates $\hat{G}_{i}$ [Fig.~\ref{fig:qc-operator-vectorize}(a)].
Therefore, the SVD of a general matrix operator can, at the formal level, be mapped to the Schmidt decomposition problem with a bipartite ansatz $|B\rangle=\sum_{n}w_{n}\hat{U}\otimes\hat{V}|n,n\rangle$ [Fig.~\ref{fig:qc-operator-vectorize}(b)].
In practice, however, the overlaps required for our variational SVD algorithms rely on the ability to realize the target state $|A\rangle$ as a single state-preparation procedure with post-selection.
Crucially, unlike standard methods that require the sequential quantum evaluation of quantities such as $\langle A|B\rangle=\sum_{n}w_{n}[U^{\dagger}AV^{*}]_{n,n}$ as used in \cite{wang_Variational_2021}, this formulation allows us to apply unitary operations in parallel to the two subsystems during the state preparation.
Nevertheless, this structural advantage is realized only when the vectorized state $|A\rangle$ admits an efficient single-circuit implementation, for example through a block-encoding scheme with post-selection or another compact circuit embedding of the operator sum.
Otherwise, the evaluation of the relevant overlaps reduces to term-by-term processing over the operator decomposition of $\hat{A}$, and the practical advantage of the present framework is substantially diminished.
Thus, the vectorization viewpoint provides a natural conceptual route from operator SVD to Schmidt decomposition, while its practical utility depends on whether such an efficient target-state preparation is available.


\begin{figure}[htbp]
\centering
\includegraphics[width=0.9\columnwidth]{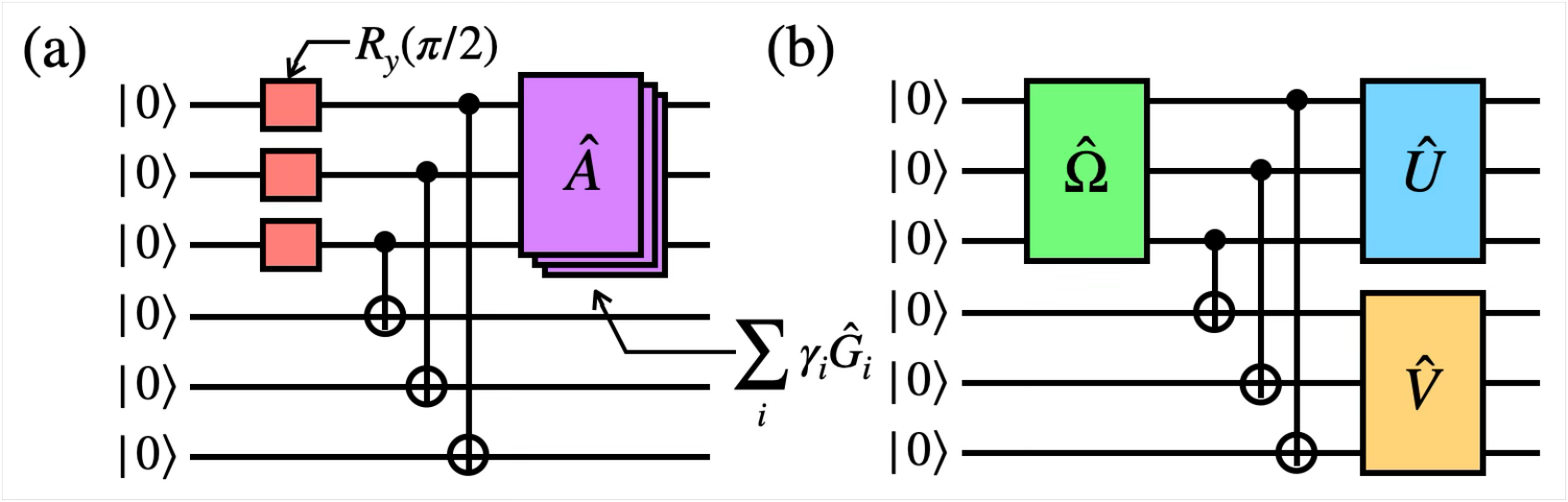}
\caption{
(a) Quantum circuit for preparing a state proportional to the vectorized state $|A\rangle=\sum_{n,m}|n,m\rangle\langle m|\hat{A}|n\rangle$, corresponding to a general matrix operator $\hat{A}=\sum_{i}\gamma_{i}\hat{G}_{i}$.
The red boxes indicate the product of $R_{y}(\pi/2)$ gates.
(b) Quantum circuit for the bipartite ansatz $|B\rangle=\sum_{n}w_{n}\hat{U}\otimes\hat{V}|n,n\rangle$ used in our variational SVD algorithms.
}
\label{fig:qc-operator-vectorize}
\end{figure}


\section{Mathematical Details of Variational SVD Algorithms}
\label{app:mathematical_details_of_vqsvd}

\subsection{Majorization}
\label{app:majorization}

In this section, we introduce the concept of majorization and explain the Hardy-Littlewood-P\'olya theorem, and discuss its application in the subsequent section.
The details of majorization, including the proofs of the theorems, can be found in the monograph by Marshall, Olkin, and Arnold \cite{marshall_Inequalities_2011a}.

\begin{dfn}[Majorization]
\label{thm:majorization}
Given a vector $\bm{x}=(x_{1},\cdots,x_{N})\in\mathbb{R}^{N}$, we denote by $\bm{x}^{\downarrow}$ the vector with the same entries as $\bm{x}$ rearranged in decreasing order: $x^{\downarrow}_{1}\ge\cdots\ge x^{\downarrow}_{N}$.
The vector $\bm{x}$ is said to be majorized by $\bm{y}$ (denoted by $\bm{x}\prec\bm{y}$) if
\begin{align}
\sum^{k}_{n=1}x_{n}^{\downarrow}\le \sum^{k}_{n=1}y_{n}^{\downarrow}
\end{align}
for $k=1,\cdots,N-1$ and
\begin{align}
\sum^{N}_{n=1}x_{n}^{\downarrow}=\sum^{N}_{n=1}y_{n}^{\downarrow}.
\end{align}
\end{dfn}

\begin{dfn}[Doubly Stochastic Matrix]
\label{def:doubly_stochastic}
A matrix $W=(W_{n,m})_{1\le n,m\le N}\in\mathbb{R}^{N\times N}$ is called doubly stochastic if all entries are non-negative ($W_{n,m}\ge 0$) and each row and column sums to 1:
\begin{align}
\sum^{N}_{m=1}W_{n,m}=1\quad(\forall n),\\
\sum^{N}_{n=1}W_{n,m}=1\quad(\forall m).
\end{align}
\end{dfn}

\begin{thm}[Hardy-Littlewood-P\'olya theorem]
\label{thm:hardy-littlewood-polya}
Let $\bm{x}$ and $\bm{y}$ be two vectors in $\mathbb{R}^{N}$.
Then the following statements are equivalent:
\begin{enumerate}[label=(\Alph*)]
\item $\bm{x}\prec\bm{y}$,
\item there exists a doubly stochastic matrix $W=(W_{n,m})_{1\le n,m\le N}$ such that $\bm{x}=W\bm{y}$,
\item the inequality $\sum^{N}_{n=1}f(x_{n})\le \sum^{N}_{n=1}f(y_{n})$ holds for every continuous convex function $f(x)$.
\end{enumerate}
\end{thm}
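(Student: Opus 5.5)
I would prove the cycle of implications (B)$\Rightarrow$(C)$\Rightarrow$(A)$\Rightarrow$(B), which is the standard route in Marshall, Olkin, and Arnold.

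\textbf{(B)$\Rightarrow$(C).} Suppose $\bm{x}=W\bm{y}$ with $W$ doubly stochastic. Each row of $W$ is a probability vector, so $x_{n}=\sum_{m}W_{n,m}y_{m}$ is a convex combination of the $y_m$. Jensen's inequality then gives $f(x_{n})\le\sum_{m}W_{n,m}f(y_{m})$. Summing over $n$ and using that the columns of $W$ sum to one yields $\sum_{n}f(x_{n})\le\sum_{m}f(y_{m})$.

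\textbf{(C)$\Rightarrow$(A).} I would apply (C) to specially chosen convex functions.
\begin{itemize}
\item Taking $f(t)=t$ and $f(t)=-t$ gives equality of the total sums.
\item For each $k$, set $a=y^{\downarrow}_{k}$ and take $f(t)=\max(t-a,0)$, which is continuous and convex. Then
\begin{align}
\sum_{n=1}^{k}x^{\downarrow}_{n}\le ka+\sum_{n=1}^{k}(x^{\downarrow}_{n}-a)\le ka+\sum_{n}f(x_{n})\le ka+\sum_{n}f(y_{n}).
\end{align}
\item Finally, $\sum_{n}f(y_{n})=\sum_{n=1}^{k}(y^{\downarrow}_{n}-a)$, because exactly the top $k$ entries are at least $a$ and the remaining terms vanish. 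Hence the right-hand side equals $\sum_{n=1}^{k}y^{\downarrow}_{n}$, which gives the partial-sum inequalities.
\end{itemize}

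\textbf{(A)$\Rightarrow$(B).} This is the main obstacle, and I would handle it by induction using T-transforms (Robin Hood transfers). A T-transform is $T=\lambda I+(1-\lambda)Q$, where $Q$ is the transposition of two coordinates and $\lambda\in[0,1]$; it is doubly stochastic.

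\begin{itemize}
\item First reduce to sorted vectors. Permutation matrices are doubly stochastic, and products of doubly stochastic matrices are doubly stochastic, so we may assume $\bm{x}=\bm{x}^{\downarrow}$ and $\bm{y}=\bm{y}^{\downarrow}$.
\item If $\bm{x}\neq\bm{y}$, let $j$ be the largest index with $y_{j}>x_{j}$ and $k$ the smallest index greater than $j$ with $y_{k}<x_{k}$. Both exist because the totals agree and the partial-sum inequalities hold.
\item Set $\delta=\min(y_{j}-x_{j},\,x_{k}-y_{k})$ and define $\bm{y}'$ by moving $\delta$ from coordinate $j$ to coordinate $k$. Then $\bm{y}'=T\bm{y}$ for a T-transform on coordinates $(j,k)$, since $y_{j}'$ lies between $y_{k}$ and $y_{j}$.
\end{itemize}

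The delicate check is that $\bm{x}\prec\bm{y}'$ still holds and that $\bm{y}'$ agrees with $\bm{x}$ in at least one more coordinate.
\begin{itemize}
\item The partial sums of $\bm{y}'$ change only for indices $j\le i<k$, where they decrease by $\delta$.
\item On that range, $y_{i}=x_{i}$ for $j<i<k$ by the choice of $j$ and $k$. Together with $\delta\le y_{j}-x_{j}$, this shows the partial sums of $\bm{y}'$ still dominate those of $\bm{x}$.
\item The same facts confirm that $\bm{y}'$ remains decreasingly ordered.
\end{itemize}

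Since each step increases the number of coordinates where $\bm{y}'$ matches $\bm{x}$, at most $N$ steps produce $\bm{x}=T_{r}\cdots T_{1}\bm{y}$. This product is doubly stochastic, which gives (B) and closes the cycle.
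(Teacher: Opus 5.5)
Your cycle (B)$\Rightarrow$(C)$\Rightarrow$(A)$\Rightarrow$(B) is correct, and it is the standard argument of Marshall, Olkin, and Arnold, which the paper cites in place of giving its own proof: Jensen's inequality, then the test functions $\pm t$ and $\max(t-a,0)$, then induction by T-transforms with your choice of $j$, $k$ and $\delta$. One phrase is slightly off: with ties at $a=y^{\downarrow}_{k}$, more than $k$ entries of $\bm{y}$ can be $\ge a$, but the extra ones equal $a$ and contribute zero, so $\sum_{n}f(y_{n})=\sum_{n=1}^{k}(y^{\downarrow}_{n}-a)$ still holds.
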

The formulation of the above definition and theorem concerning the majorization in our paper follows Ref.~\cite{marshall_Inequalities_2011a}.
Furthermore, utilizing the above theorem, we can prove the following properties for the weighted majorization.

\begin{cor}
\label{thm:weighted-majorization}
Let $\bm{x}$ and $\bm{y}$ be two vectors in $\mathbb{R}^{N}$ satisfying $\bm{x}\prec\bm{y}$ with decreasing entries.
Let $\bm{c}$ be the vector in $\mathbb{R}^{N}$ with entries arranged in decreasing order.
Suppose $f(x)$ is continuous, convex, and monotonically increasing.
Then the following inequality holds:
\begin{align}
\sum^{N}_{n=1}c_{n}f(x_{n})\le \sum^{N}_{n=1}c_{n}f(y_{n}).
\end{align}
Similarly, if $f(x)$ is continuous, convex, and monotonically decreasing, the following inequality holds:
\begin{align}
\sum^{N}_{n=1}c_{n}f(x_{n})\le \sum^{N}_{n=1}c_{N-n+1}f(y_{n}).
\end{align}
\end{cor}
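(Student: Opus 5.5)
The plan is to reduce both inequalities to a \emph{weak} majorization statement for the transformed vectors and then use Abel summation against the decreasing weights $\bm{c}$. One caveat first: Abel summation leaves a boundary term $c_{N}\sum_{n}f(\cdot)$, and the total sums of $f(x_n)$ and $f(y_n)$ need not coincide. So I will assume $c_{N}\ge 0$, i.e.\ $c_n\ge 0$ for all $n$. This holds in the intended application, where $c_n=w_n>0$. Without it the claim fails: with $c_n\equiv-1$ and $\bm{x}=\bm{y}$ replaced by any $\bm{x}\prec\bm{y}$, $\bm{x}\neq\bm{y}$, the inequality reverses by Theorem~\ref{thm:hardy-littlewood-polya}(C).

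\emph{Step 1 (weak majorization for increasing $f$).} By Theorem~\ref{thm:hardy-littlewood-polya}(B) write $\bm{x}=W\bm{y}$ with $W$ doubly stochastic. Jensen's inequality for the convex $f$ gives $f(x_n)\le\sum_m W_{n,m}f(y_m)$. Summing over $n\le k$,
\begin{align}
\sum_{n=1}^{k}f(x_n)\le\sum_{m=1}^{N}a^{(k)}_m f(y_m),\qquad a^{(k)}_m=\sum_{n=1}^{k}W_{n,m}\in[0,1],\quad \sum_m a^{(k)}_m=k.
\end{align}
A linear functional with weights in $[0,1]$ of total mass $k$ is maximized by putting unit weight on the $k$ largest values of $f(y_m)$. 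Since $\bm{y}$ is decreasing and $f$ is increasing, these are $f(y_1),\dots,f(y_k)$. Hence $F^x_k:=\sum_{n\le k}f(x_n)\le F^y_k:=\sum_{n\le k}f(y_n)$ for every $k=1,\dots,N$.

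\emph{Step 2 (Abel summation).} Write $\sum_n c_n f(x_n)=\sum_{k=1}^{N-1}(c_k-c_{k+1})F^x_k+c_NF^x_N$. Every coefficient is nonnegative because $\bm{c}$ is decreasing and $c_N\ge0$. Replacing $F^x_k$ by $F^y_k$ termwise gives the first inequality.

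\emph{Step 3 (decreasing $f$).} Now $f(x_n)$ is increasing in $n$, so it is oppositely ordered to $\bm{c}$. By the rearrangement inequality, $\sum_n c_nf(x_n)\le\sum_n c_nf(x_{N-n+1})$. The sequence $g_n=f(x_{N-n+1})$ is decreasing. Repeat Step~1 with the reversed vectors: the Jensen bound is unchanged, and the $k$ largest values of $f(y_m)$ are now $f(y_N),\dots,f(y_{N-k+1})$. This yields $\sum_{n\le k}g_n\le\sum_{n\le k}f(y_{N-n+1})$. Abel summation as in Step~2 then gives $\sum_n c_n f(x_n)\le\sum_n c_nf(y_{N-n+1})=\sum_n c_{N-n+1}f(y_n)$.

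The main obstacle is Step~1, namely obtaining the partial-sum inequality for $f(\bm{x})$ rather than just the total-sum inequality of Theorem~\ref{thm:hardy-littlewood-polya}(C). The doubly stochastic representation combined with the "fractional top-$k$" maximization argument is what I expect to carry it. The remaining subtlety is only the sign of the boundary term $c_N$, which is handled by the nonnegativity assumption above.
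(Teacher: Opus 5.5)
Your proof is correct once the hypothesis $c_N\ge 0$ is added, and it takes a route dual to the paper's.

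The paper attaches the weights before applying Jensen: $\sum_n c_n f(x_n)\le\sum_m d_m f(y_m)$ with $\bm{d}=W^{T}\bm{c}$. It then notes that $\bm{d}\prec\bm{c}$ because $W^{T}$ is doubly stochastic, and finishes with rearrangement and Abel summation against the monotone sequence $f(y_n)$. Since $\bm{d}$ and $\bm{c}$ have equal totals, no boundary term appears. You instead move the majorization onto the function values, proving the weak submajorization $\sum_{n\le k}f(x_n)\le\sum_{n\le k}f(y_n)$ by the fractional top-$k$ bound, and sum by parts against $\bm{c}$. The sign condition then appears explicitly in the boundary term $c_N(F^y_N-F^x_N)$.

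The paper needs the same condition but uses it silently: multiplying $f(x_n)\le\sum_m W_{n,m}f(y_m)$ by $c_n$ requires $c_n\ge 0$. So your caveat correctly identifies a hypothesis missing from the statement. Your counterexample should specify a strictly convex $f$ such as $e^{x}$, since for $f(x)=x$ the two sides coincide.

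Your route also shows exactly when the hypothesis is unnecessary: whenever $\sum_n f(x_n)=\sum_n f(y_n)$, in particular for affine $f$, the boundary term vanishes. You should state this. Your claim that the intended application has $c_n=w_n>0$ is not accurate. In Eq.~\eqref{eq:inequality-fidelity} the corollary is used with $f(x)=x$ and $\bm{c}=\bar{\bm{w}}$, whose entries $-w_K,\dots,-w_1$ are negative. That vector $\bm{x}=\bar{\bm{\sigma}}'$ is also not sorted, which neither argument needs in the increasing case. Your version with $c_N\ge 0$ therefore does not literally cover the paper's use. It does cover it after the one-line remark above. In the paper's route this case is automatic, because the Jensen step is an equality for linear $f$.
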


\begin{proof}
From Theorem~\ref{thm:hardy-littlewood-polya}, there exists a doubly stochastic matrix $W=(W_{nm})_{1\le n,m\le N}$ such that $\bm{x}=W\bm{y}$.
Using Jensen's inequality for the convex function $f$, we have:
\begin{align}
\sum^{N}_{n=1}c_{n}f(x_{n})=&
\sum^{N}_{n=1}c_{n}f(\sum^{N}_{m=1}W_{n,m}y_{m})
\nonumber\\\le&
\sum^{N}_{n=1}c_{n}\sum^{N}_{m=1}W_{n,m}f(y_{m})
\nonumber\\=&
\sum^{N}_{m=1}\left(\sum^{N}_{n=1}c_{n}W_{n,m}\right)f(y_{m})
\nonumber\\=&
\sum^{N}_{m=1}d_{m}f(y_{m})
\end{align}
where we define $d_{m}=\sum^{N}_{n=1}W_{n,m}c_{n}$.
Since $W$ is doubly stochastic, this relation implies $\bm{d}\prec\bm{c}$.
Since $f$ is monotonically increasing and $\bm{y}$ is decreasing, the sequence $\{f(y_{n})\}$ is decreasing.
Given $\bm{d}\prec\bm{c}$ and the ordering of $\bm{c}$ and $\{f(y_{n})\}$, we can apply the property of majorization to obtain:
\begin{align}
\sum^{N}_{n=1}d_{n}f(y_{n})\le&
\sum^{N}_{n=1}d_{n}^{\downarrow}f(y_{n})
\nonumber\\=&
\sum^{N}_{n=1}d_{n}^{\downarrow}f(y_{N})
\nonumber\\&
+\sum^{N-1}_{n=1}(f(y_{n})-f(y_{n+1}))
\sum^{n}_{m=1}d_{m}^{\downarrow}
\nonumber\\\le&
\sum^{N}_{n=1}c_{n}^{\downarrow}f(y_{n})
=\sum^{N}_{n=1}c_{n}f(y_{n})
\end{align}
Similarly, if $f(x)$ is monotonically decreasing, $\{f(y_{n})\}$ becomes increasing.
The upper bound is then achieved by pairing the largest $c$ with the largest $f(y)$, which corresponds to the reverse order:
\begin{align}
\sum^{N}_{n=1}d_{n}f(y_{n})\le&
\sum^{N}_{n=1}c_{N-n+1}f(y_{n})
\end{align}
\end{proof}
As seen from Theorem~\ref{thm:weighted-majorization}, if we assume $c_{n}=1$ for every $n$, we recover the property of majorization in Theorem~\ref{thm:hardy-littlewood-polya}(C).
In the subsequent section, we prove the fundamental inequality used in our SVD algorithms by utilizing the above theorems.

\subsection{Upper Bound of the Objective Function for Full Optimization}
\label{app:details_of_global_optimization}

In this section, utilizing the concept of majorization discussed in the previous section, we discuss the properties of the full optimization algorithm by evaluating the upper bound of the objective function, which corresponds to the overlap between the reference state and the quantum circuit state.

First, we assume that the reference state $\ket{\Psi}$ and the quantum circuit state $\ket{\Phi}$ are explicitly written as follows:
\begin{align}
\ket{\Psi}=&\sum^{N}_{n=1}\sum^{M}_{m=1}C_{n,m}\ket{n,m},
\\
\ket{\Phi}=&(\hat{U}\otimes\hat{V})\hat{W}|0\rangle=\sum^{K}_{n=1}w_{n}\ket{u_{n},v_{n}},
\end{align}
with $\hat{W}|0\rangle=\sum^{K}_{n=1}w_{n}|n,n\rangle$, $|u_{n}\rangle=\hat{U}|n\rangle$, $|v_{n}\rangle=\hat{V}|n\rangle$, and $K=\mathrm{min}(N,M)$.
$N$ and $M$ represent the dimensions of the left and right subspaces, respectively.
We denote the product state $\ket{\phi}\otimes\ket{\varphi}$ as $\ket{\phi,\varphi}$.
The coefficients $\{w_{n}\}^{K}_{n=1}$ are non-negative numbers defined in decreasing order: $1\ge w_{1}\ge\cdots\ge w_{K}\ge 0$, and satisfy the normalization condition $\sum^{K}_{n=1}w^{2}_{n}=1$.
Under the above notations, the overlap between $\ket{\Psi}$ and $\ket{\Phi}$ is given by
\begin{align}
\langle\Phi|\Psi\rangle=&\sum^{K}_{k=1}\sum^{N}_{n=1}\sum^{M}_{m=1}
w_{k}\langle u_{k}|n\rangle C_{n,m}\langle v_{k}|m\rangle
\nonumber\\=&
\mathrm{Tr}\left[WU^{\dagger}CV^{*}\right],
\end{align}
where $W$ is a $M\times N$ rectangular matrix with diagonal entries $W_{n,n}=w_{n}$ and zero off-diagonal entries, and $U$ and $V$ are $N\times N$ and $M\times M$ unitary matrices defined by $U_{k,n}=\langle k|u_{n}\rangle$ and $V_{m,k}=\langle m| v_{k}\rangle$, respectively.
\begin{widetext}
Moreover, we can rewrite the real part of $\langle\Phi|\Psi\rangle$ using a block matrix representation:
\begin{align}
\mathrm{Re}\langle\Phi|\Psi\rangle=&
\frac{1}{2}\mathrm{Tr}\left[
\begin{pmatrix}0&W\\W&0\end{pmatrix}
\begin{pmatrix}U^{\dagger}&0\\0&V^{T}\end{pmatrix}
\begin{pmatrix}0&C\\C^{\dagger}&0\end{pmatrix}
\begin{pmatrix}U&0\\0&V^{*}\end{pmatrix}
\right]
\nonumber\\=&
\frac{1}{2}\mathrm{Tr}\left[
\begin{pmatrix}0&W\\W&0\end{pmatrix}
\begin{pmatrix}U^{\dagger}X&0\\0&V^{T}Y\end{pmatrix}
\begin{pmatrix}0&D\\D&0\end{pmatrix}
\begin{pmatrix}X^{\dagger}U&0\\0&Y^{\dagger}V^{*}\end{pmatrix}
\right]
\nonumber\\=&
\frac{1}{2}\mathrm{Tr}\left[\overline{W}Q^{\dagger}\overline{D}Q\right]
\end{align}
where the SVD of $C$ is given by $C=XDY^{\dagger}$, and the entries of the diagonal matrix $D$ satisfy $1\ge \sigma_{1}\cdots\ge \sigma_{K}\ge 0$.
\end{widetext}
We assume the matrices $\overline{W}$ and $\overline{D}$ are diagonal matrices with entries $\bar{w}_{n}\in\bigcup^{K}_{n=1}\{w_{n},-w_{n}\}$ and $\bar{\sigma}_{n}\in\bigcup^{K}_{n=1}\{\sigma_{n},-\sigma_{n}\}$, arranged in decreasing order.
Specifically:
\begin{align}
\overline{W}=&\begin{pmatrix}W&0\\0&-PWP\end{pmatrix},
\\
\overline{D}=&\begin{pmatrix}D&0\\0&-PDP\end{pmatrix},
\end{align}
where $P$ is a permutation matrix satisfying $P_{n,m}=\delta_{n,K-m+1}$.
Additionally, we define the unitary matrix $Q$ as
\begin{align}
Q=\left[
\begin{matrix}
\frac{1}{2}(X^{\dagger}U+Y^{\dagger}V^{*})&\frac{i}{2}P(X^{\dagger}U-Y^{\dagger}V^{*})\\
-\frac{i}{2}(X^{\dagger}U-Y^{\dagger}V^{*})P&\frac{1}{2}P(X^{\dagger}U+Y^{\dagger}V^{*})P
\end{matrix}
\right].
\end{align}
Next, let $\bar{\sigma}'_{n}$ be the diagonal elements of the matrix $Q^{\dagger}\overline{D}Q$.
Since the matrix $R_{n,m}=|Q_{n,m}|^2$ is a doubly stochastic matrix satisfying $\sum^{2K}_{n=1}R_{n,m}=\sum^{2K}_{m=1}R_{n,m}=1$ and $\bar{\sigma}'_{n}=\sum^{2K}_{m=1}R_{n,m}\bar{\sigma}_{m}$, we find that $\bar{\bm{\sigma}}$ majorizes $\bar{\bm{\sigma}}'$ by Theorem \ref{thm:hardy-littlewood-polya}.
Thus, we obtain the following inequality:
\begin{align}
\mathrm{Re}\langle\Phi|\Psi\rangle
&= \frac{1}{2}\sum^{2K}_{n=1}\bar{w}_{n}\bar{\sigma}'_{n}
\nonumber\\&
\le \frac{1}{2}\sum^{2K}_{n=1}\bar{w}_{n}\bar{\sigma}_{n}
= \sum^{K}_{n=1}w_{n}\sigma_{n}.
\label{eq:inequality-fidelity}
\end{align}
where we apply Corollary~\ref{thm:weighted-majorization} with $f(x)=x$, demonstrating that the weighted sum is maximized when the vectors are similarly ordered.
It is worth noting that the equality condition is given by
\begin{align}
X^{\dagger}U=Y^{\dagger}V^{*}=\Lambda
\end{align}
where $\Lambda$ is the diagonal matrix with the entries $e^{i\phi_{n}}$.
Thus, noting that the diagonal matrices $\Lambda$ and $D$ commute, we find that the maximization of $\mathrm{Re}\langle\Phi|\Psi\rangle$ for arbitrary matrices $U$ and $V$ corresponds to the SVD: $C=UDV^{T}=X\Lambda D\Lambda^{-1}Y^{\dagger}=XDY^{\dagger}$.

On the other hand, if the unitary matrices $U$ and $V$, represented by a limited number of variational parameters, do not have enough expressive power to perform the SVD exactly, we cannot obtain the rigorous singular values by simply solving the maximization problem of $\mathrm{Re}\langle\Phi|\Psi\rangle$.
In such cases, we need to consider alternative approaches for evaluating the approximated singular values using the observables $c_{n}=(U^{\dagger}_\mathrm{opt}CV^{*}_\mathrm{opt})_{n,n}$ with the optimized unitary matrices $U=U_\mathrm{opt}$ and $V=V_\mathrm{opt}$.
We focus on the following inequality:
\begin{align}
\sum^{K}_{n=1}w_{n}\mathrm{Re}c_{n}&
\le\sum^{K}_{n=1}w_{n}|\mathrm{Re}c_{n}|
\nonumber\\&
\le\sum^{K}_{n=1}w_{n}|c_{n}|
\le\sum^{K}_{n=1}w_{n}|c_{\tau_{n}}|
\label{eq:inequality-real-to-abs}
\end{align}
where $\tau_{n}$ is a permutation of $\{1,\cdots,K\}$ such that $|c_{\tau_{n}}|$ are sorted in decreasing order, and we assume that $w_{n}$ are non-negative and also sorted in decreasing order.
Next, when we introduce the unitary matrices $U$ and $V$ such that $U_{n,m} = \sum^{K}_{k=1}[U_\mathrm{opt}]_{n,k}e^{i\mathrm{arg}(c_{k})}\delta_{k,\tau_{m}}$ and $V_{n,m} = \sum_{k}[V_\mathrm{opt}]_{n,k}\delta_{k,\tau_{m}}$, we obtain the following inequality:
\begin{align}
\mathrm{Re}\mathrm{Tr}(WU^{\dagger}CV^{*})=&
\sum^{K}_{n=1}w_{n}e^{-i\mathrm{arg}(c_{\tau_{n}})}c_{\tau_{n}}
\nonumber\\=&
\sum^{K}_{n=1}w_{n}|c_{\tau_{n}}|
\le\sum^{K}_{n=1}w_{n}\sigma_{n}
\label{eq:inequality-abs}
\end{align}
where we used Eq.~\eqref{eq:inequality-fidelity}.
Thus, combining Eqs.~\eqref{eq:inequality-real-to-abs}, and \eqref{eq:inequality-abs}, we obtain the following inequality:
\begin{align}
\mathrm{Re}\langle\Phi|\Psi\rangle
\le\sum^{K}_{n=1}w_{n}|c_{\tau_{n}}|
\le\sum^{K}_{n=1}w_{n}\sigma_{n}.
\end{align}
This result suggests that the absolute values $\{|c_{n}|\}^{K}_{n=1}$ sorted in the descending order provide the tightest upper bound of $\mathrm{Re}\langle\Phi|\Psi\rangle$ and serve as the best approximation for the exact singular values $\{\sigma_{n}\}^{K}_{n=1}$.

\subsection{Error Propagation Analysis for the Partial Optimization Algorithm}
\label{app:details_of_partial_optimization}

In our partial optimization algorithm, we iteratively recover the singular values $\{\sigma_{n}\}$ from the cumulative sum of their $p$-th powers.
We specifically choose $p=1$, as this choice is critical for ensuring numerical robustness and minimizing error amplification.
In this subsection, we analyze the error propagation for various values of $p$ to justify this choice.
First, let $s_{k}^{(n)}=\sigma_{k}+\delta s_{k}^{(n)}$ denote the estimated $k$-th singular value obtained at the $n$-th optimization step, where $\delta s_{k}^{(n)}$ represents the estimation error.
Using these values, we define the cumulative sum of the $p$-th powers, denoted as $S_{n}$, as follows:
\begin{align}
&S_{n} = \sum^{n}_{k=1}(s_{k}^{(n)})^{p}.
\end{align}
Subsequently, the $n$-th singular value is recovered from the difference between consecutive sums:
\begin{align}
s_{n}^{p}=S_{n}-S_{n-1}.
\end{align}
Based on these definitions, assuming that the errors are sufficiently small (i.e., $|\delta s^{(n)}_{k}|\ll\sigma_{k}$), the absolute error of the cumulative sum, $\delta S_{n}$, and the error in the recovered singular value, $\delta s_{n}$, can be expressed to the first order as:
\begin{align}
\delta S_{n}=&
\sum^{n}_{k=1}p(\sigma_{k})^{p-1}\delta s_{k}^{(n)},
\label{eq:absolute-error-cumulative-sum}\\
\delta s_{n}=& 
\frac{1}{p}\sigma_{n}^{1-p}(\delta S_{n}-\delta S_{n-1}).
\label{eq:absolute-error-recovered-singular-value}
\end{align}
Combining these two relations, the relative error $\delta s_{n}/\sigma_{n}$ is given by
\begin{align}
\frac{\delta s_{n}}{\sigma_{n}}=&
\frac{\delta s_{n}^{(n)}}{\sigma_{n}}
+\sum^{n-1}_{k=1}\left(\frac{\sigma_{k}}{\sigma_{n}}\right)^{p}
\left(\frac{\delta s_{k}^{(n)}}{\sigma_{k}}-\frac{\delta s_{k}^{(n-1)}}{\sigma_{k}}\right).
\label{eq:relative-error-recovered-singular-value}
\end{align}
We now analyze the numerical stability of this scheme, considering the order of singular values $\sigma_{k}\ge\sigma_{n}$ for $k<n$.
For the case of $0\le p< 1$, the exponent $p-1$ in Eq.~\eqref{eq:absolute-error-cumulative-sum} is negative.
Consequently, the term corresponding to $k=n$, $p(\sigma_{n})^{p-1}\delta s_{n}^{(n)}$, diverges as the smallest singular value $\sigma_{n}$ approaches zero.
This singularity causes the absolute error $\delta S_{n}$ to amplify significantly, rendering the forward calculation of the cumulative sum unstable.
Conversely, for $p>1$, we examine the relative error propagation in Eq.~\eqref{eq:relative-error-recovered-singular-value}.
Here, the amplification factor for the errors from previous steps ($k<n$) is given by the ratio $(\sigma_{k}/\sigma_{n})^{p}$.
Since $\sigma_{k}\ge\sigma_{n}$, this amplification factor grows superlinearly with respect to this ratio for $p>1$, leading to a significantly stronger accumulation of errors compared to the linear case of $p=1$.
Thus, $p=1$ is the optimal choice as it minimizes this amplification while avoiding the divergence issue associated with $p<1$.
For these reasons, we adopt $p=1$ to ensure numerical robustness.


\section{Benchmark on a 1D Gapped System}
\label{app:benchmark_on_1d_gapped_system}

\subsection{1D Heisenberg Spin Ladder}
\label{app:benchmark_on_1d_spin_ladder}


\begin{figure*}[htbp]
\centering
\includegraphics[width=2.0\columnwidth]{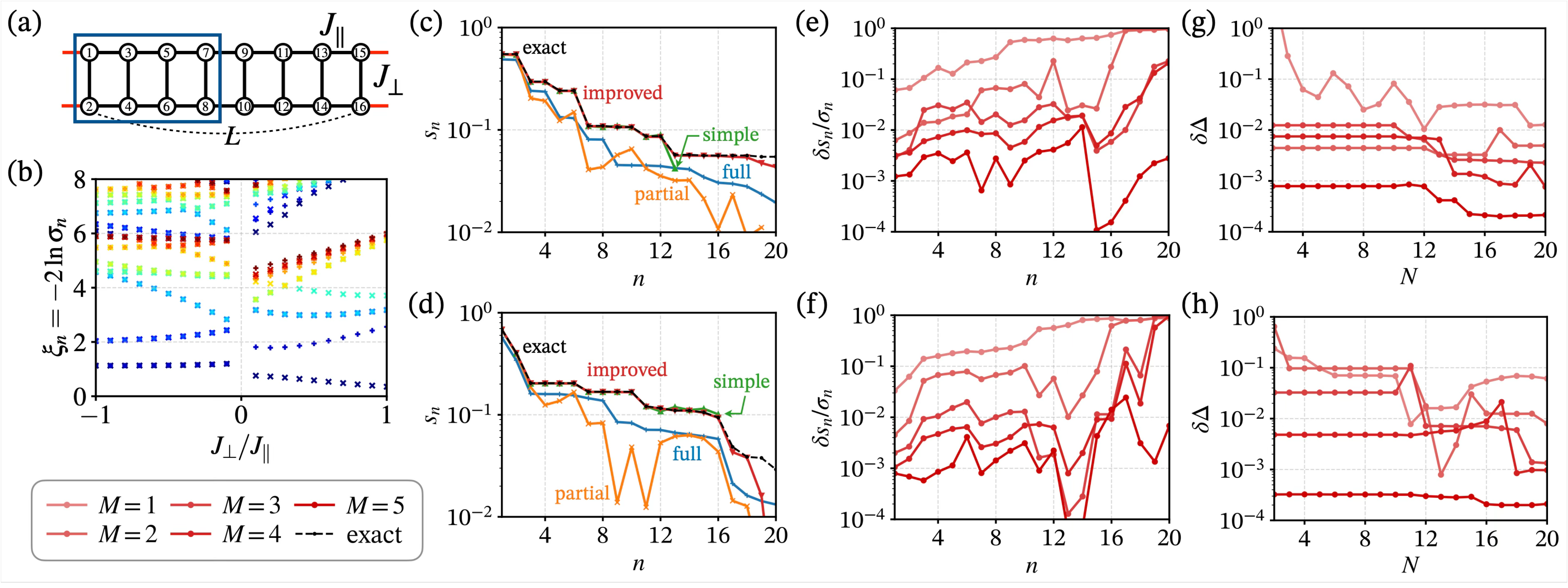}
\caption{
Performance benchmarks of partial SVD algorithms for the 1D Heisenberg spin ladder with system size $L=8$ under antiperiodic boundary conditions (APBC).
The number of layers is fixed to $M=4$ in (c) and (d).
(a) Schematic representation of the 1D spin ladder, where each site is labeled by its site index and the rectangular region enclosed by a blue line denotes the subsystem $A$.
(b) Entanglement spectrum $\xi_{n}=-2\ln{\sigma_{n}}$ as a function of the ratio $J_{\perp}/J_{\parallel}$.
The double degeneracy observed for $J_{\perp}/J_{\parallel}<0$ signifies a topologically non-trivial ground state protected by inversion symmetry.
(c)-(h) Comparison of algorithm performance for the topologically non-trivial ($J_{\perp}/J_{\parallel}=-0.1$; top panels) and trivial ($J_{\perp}/J_{\parallel}=0.1$; bottom panels) phases.
(c)(d) Singular values obtained via the four proposed partial SVD algorithms.
(e)-(h) Results obtained via the improved deflation algorithm using the quantum circuit with the number of layers up to $M=5$.
(e)(f) Relative errors in singular values.
(g)(h) Absolute error in the Schmidt gap $\Delta=\xi_{2}-\xi_{1}$ as a function of deflation steps $N$.
}
\label{fig:vqsvd-demo-ladder}
\end{figure*}


While the main text focuses on representative gapless systems, such as the 1D chain and 2D square lattice, this section benchmarks our variational SVD algorithms on a prototypical gapped system: the 1D Heisenberg spin ladder shown in Fig.~\ref{fig:vqsvd-demo-ladder}(a).
The Hamiltonian of the two-leg spin ladder is given by
\begin{align}
\hat{H}=& J_{\parallel}\sum_{l=1,2}\sum^{L}_{n=1}
\hat{\bm{S}}_{l,n}\cdot\hat{\bm{S}}_{l,n+1}
+J_{\perp}\sum^{L}_{n=1}\hat{\bm{S}}_{1,n}\cdot\hat{\bm{S}}_{2,n},
\end{align}
where $\hat{\bm{S}}_{l,n}$ is the spin-$1/2$ operator acting on a qubit located at a 2D coordinate $(l,n)$, with $l\in\{1,2\}$ denoting the leg index and $n\in\{1,\cdots,L\}$ the rung index.
The explicit mapping between these physical 2D coordinates and the qubit site indices is illustrated in Fig.~\ref{fig:vqsvd-demo-ladder}(a).
The parameters $J_{\parallel}$ and $J_{\perp}$ represent the exchange couplings along the legs and rungs, respectively.
To investigate the topological properties of the ground state, we impose twisted boundary conditions along the leg direction, which is implemented by setting $\hat{S}^{\pm}_{l,L+1}=e^{\pm i\theta}\hat{S}^{\pm}_{l,1}$ and $\hat{S}^{z}_{l,L+1}=\hat{S}^{z}_{l,1}$.

In our calculations, we fix the system length to $L=8$ and the leg coupling to $J_{\parallel}=1$.
By tuning the ratio of the rung to leg couplings, $J_{\perp}/J_{\parallel}$, we investigate two distinct gapped ground states.
In the positive regime $J_{\perp}/J_{\parallel}=0.1$, the ground state forms a trivial rung-singlet phase.
Conversely, the negative regime $J_{\perp}/J_{\parallel}=-0.1$ yields a Haldane phase composed of $S=1$ triplets.
Crucially, under antiperiodic boundary conditions (APBC) implemented by $\theta=\pi$, the Haldane phase is a topologically non-trivial state characterized by a strict double degeneracy in its entanglement spectrum (ES), $\xi_{n}=-2\ln{\sigma_{n}}$, which is protected by inversion symmetry \cite{pollmann_Entanglement_2010,miyakoshi_Entanglement_2016}.
As demonstrated in Fig.~\ref{fig:vqsvd-demo-ladder}(b), this topological degeneracy provides a stringent physical benchmark for our partial SVD algorithms.

Figures~\ref{fig:vqsvd-demo-ladder}(c) and (d) display the singular values computed by the four proposed algorithms for these two ground states.
The results reveal a contrast in performance: while global optimization approaches (full and partial optimization) fail to resolve the exact two-fold degeneracy in the Haldane phase, the simple and improved deflation algorithms precisely reproduce the doubly-degenerate ES in the dominant singular values.
This qualitative difference highlights that deflation-based partial SVD approaches, which enforce orthogonality step-by-step, are essentially indispensable for capturing entanglement properties tied to topological degeneracy.

Beyond resolving qualitative topological features, the improved deflation algorithm also demonstrates robust quantitative accuracy.
As shown in Figs.~\ref{fig:vqsvd-demo-ladder}(e) and (f), the relative errors of the singular values systematically decrease as the number of layers $M$ increases, and this convergence behavior is independent of the underlying topological nature of the system.
Furthermore, the absolute error in the Schmidt gap, $\Delta=\xi_{2}-\xi_{1}$ shown in Figs.~\ref{fig:vqsvd-demo-ladder}(g) and (h), serves as a direct measure for resolving the topological degeneracy.
With a circuit depth of $M=4$ and $N=20$ deflation steps, the algorithm achieves an absolute error of $\sim 10^{-3}$ for both $J_{\perp}/J_{\parallel}=\pm 0.1$.
As can be clearly seen in the exact ES plotted in Fig.~\ref{fig:vqsvd-demo-ladder}(b), the entanglement gap separating the lowest degenerate levels from the higher-lying levels is on the order of $\mathcal{O}(1)$.
Compared to this gap scale, the achieved precision of $\sim 10^{-3}$ is more than sufficient to distinguish the topological degeneracy.
These results confirm that the improved deflation algorithm can efficiently construct accurate Schmidt decompositions for gapped systems, overcoming the limitations of global optimization strategies.


\bibliography{qsvd}

\end{document}